\renewcommand{\Pr}{\mathbb{P}}
\newtheorem{theorem}{Theorem}
\newtheorem{lemma}{Lemma}
\newtheorem{corollary}{Corollary}
\title{DECOrrelated feature space partitioning for distributed sparse regression}
\author{Xiangyu Wang \footnote{Department of Statistical Science, Duke University, USA.}
	\and
	David Dunson \footnotemark[1]
	\and
	Chenlei Leng \footnote{Department of Statistics, University of Warwick, UK.}}
\date{\today}
\begin{document}
\maketitle

\begin{abstract}
Fitting statistical models is computationally challenging when the sample size or the dimension of the dataset is huge. An attractive approach for down-scaling the problem size is to first partition the dataset into subsets and then fit using distributed algorithms. The dataset can be partitioned either horizontally (in the sample space) or vertically (in the feature space). While the majority of the literature focuses on sample space partitioning, feature space partitioning is more effective when $p\gg n$. Existing methods for partitioning features, however, are either vulnerable to high correlations or inefficient in reducing the model dimension. In this paper, we solve these problems through a new embarrassingly parallel framework named DECO for distributed variable selection and parameter estimation. In DECO, variables are first partitioned and allocated to $m$ distributed workers. The decorrelated subset data within each worker are then fitted via any algorithm designed for high-dimensional problems. We show that by incorporating the decorrelation step, DECO can achieve consistent variable selection and parameter estimation on each subset with (almost) no assumptions. In addition, the convergence rate is nearly minimax optimal for both sparse and weakly sparse models and does NOT depend on the partition number $m$. Extensive numerical experiments are provided to illustrate the performance of the new framework.
\end{abstract}

\section{Introduction}
In modern science and technology applications, it has become routine to collect complex datasets with a huge number $p$ of variables and/or enormous sample
size $n$. Most of the emphasis in the literature has been on addressing large $n$ problems, with a common strategy relying on partitioning data samples into subsets and fitting a model containing all the variables to each subset \citep{mcdonald2009efficient,zhang2012communication,wang2013parallelizing,scott2013bayes,wang2015parallelizing,wang2014median,minsker2015geometric}. In scientific applications, it is much more common to have huge $p$ small $n$ data sets.  In such cases, a sensible strategy is to break the features into groups, fit a model separately to each group, and combine the results. We refer to this strategy as feature space partitioning, and to the large $n$ strategy as sample space partitioning.

\par
There are several recent attempts on parallel variable selection by partitioning the feature space. \citet{song2014split} proposed a Bayesian split-and-merge (SAM) approach in which variables are first partitioned into subsets and then screened over each subset. A variable selection procedure is then performed on the variables that survive for selecting the final model. One caveat for this approach is that the algorithm cannot guarantee the efficiency of screening, i.e., the screening step taken on each subset might select a large number of unimportant but correlated variables \citep{song2014split}, so the split-and-merge procedure could be ineffective in reducing the model dimension. Inspired by a group test, \citet{zhou2014parallel} proposed a parallel feature selection algorithm by repeatedly fitting partial models on a set of re-sampled features, and then aggregating the residuals to form scores for each feature. This approach is generic and efficient, but the performance relies on a strong condition that is almost equivalent to an independence assumption on the design.  

\par
Intuitively, feature space partitioning is much more challenging than sample space partitioning, mainly because of the correlations between features. A partition of the feature space would succeed only when the features across the partitioned subsets were mutually independent. Otherwise, it is highly likely that any model posed on the subsets is mis-specified and the results are biased regardless of the sample size. In reality, however, mutually independent groups of features may not exist; Even if they do, finding these groups is likely more challenging than fitting a high-dimensional model. Therefore, although conceptually attractive, feature space partitioning is extremely challenging.


On the other hand, feature space partitioning is straightforward if the features are independent. Motivated by this key fact, we propose a novel embarrassingly-parallel framework named DECO by \textit{decorrelating} the features before partitioning. With the aid of decorrelation, each subset of data after feature partitioning can now produce consistent estimates even though the model on each subset is intrinsically mis-specified due to missing features. To the best of our knowledge, DECO is the first embarrassingly parallel framework specifically designed to accommodate arbitrary correlation structure in the features. We show, quite surprisingly, that the DECO estimate, by leveraging the estimates from subsets, achieves the same convergence rate in $\ell_2$ norm and $\ell_\infty$ norm as the estimate obtained by using the full dataset, and that the rate does not depend on the number of partitions. In view of the huge computational gain and the easy implementation, DECO is extremely attractive for fitting large-$p$ data. 

\par
The most related work to DECO is \citet{jia2012preconditioning}, where a similar procedure was introduced to improve \emph{lasso}. Our work differs  substantially in various aspects. First, our motivation is to develop a parallel computing framework for fitting large-$p$ data by splitting features, which can potentially accommodate any penalized regression methods, while \citet{jia2012preconditioning} aim solely at complying with the irrepresentable condition for \emph{lasso}. Second, the conditions posed on the feature matrix are more flexible in DECO, and our theory, applicable for not only sparse signals but also those in $l_r$ balls, can be readily applied to the preconditioned \emph{lasso} in \citet{jia2012preconditioning}. 

\par
The rest of the paper is organized as follows. In Section 2, we detail the proposed framework. Section 3 provides the theory of DECO. In particular, we show that DECO is consistent for both sparse and weakly sparse models. Section 4 presents extensive simulation studies to illustrate the performance of our framework. In Section 5, we outline future challenges and future work. All the technical details are relegated to the Appendix. 

\section{Motivation and the DECO framework}
Consider the linear regression model
\begin{equation} 
Y=X\beta+\varepsilon,\label{eq:lr}
\end{equation}
where $X$ is an $n\times p$ feature (design) matrix, $\varepsilon$  consists of $n$ \textit{i.i.d} random errors and $Y$ is the response vector.  A large class of approaches estimate $\beta$ by solving the following optimization problem
\begin{align*}
\hat\beta = arg\min_\beta \frac{1}{n}\|Y - X\beta\|_2^2 + 2\lambda_n\rho(\beta),
\end{align*}
where $\|\cdot\|_2$ is the $\ell_2$ norm and $\rho(\beta)$ is a penalty function. In this paper, we specialize our discussion to the $\ell_1$ penalty where $\rho(\beta)=\sum_{j=1}^p |\beta_j|$ \citep{tibshirani1996regression} to highlight the main
message of the paper. 
\par
As discussed in the introduction, a naive partition of the feature space will usually give unsatisfactory results under a parallel computing framework. That is why a decorrelation step is introduced. For data with $p\leq n$, the most intuitive way is to orthogonalize features via the singular value decomposition (SVD) of the design matrix as $X = UDV^T$, where $U$ is an $n\times p$ matrix, $D$ is an $p\times p$ diagonal matrix and $V$ an $p\times p$ orthogonal matrix. If we pre-multiply both sides of  \eqref{eq:lr} by $\sqrt{p}D^{-1}U^T$, we get
\begin{align}
\sqrt{p}D^{-1}U^TY = \sqrt{p}V^T\beta + \sqrt{p}D^{-1}U^T\varepsilon. \label{eq:T1}
\end{align}
It is obvious that the new features (the columns of $\sqrt{p}V^T$) are mutually orthogonal. An alternative approach to avoid doing SVD is to replace $\sqrt{p}D^{-1}U^T$ by $\sqrt{p}UD^{-1}U^T = (XX^T/p)^{\frac{+}{2}}$, where $A^+$ denotes the Moore-Penrose pseudo-inverse. Thus, we have
\begin{align}
(XX^T/p)^{\frac{+}{2}}Y = \sqrt{p}UV^T\beta + (XX^T/p)^{\frac{+}{2}}\varepsilon, \label{eq:T2}
\end{align}
where the new features (the columns of $\sqrt{p}UV^T$) are mutually orthogonal. Define the new data as $(\tilde Y, \tilde X)$. The mutually orthogonal property allows us to decompose $\tilde X$ column-wisely to $m$ subsets $\tilde X^{(i)}, i = 1, 2, \cdots, m$, and still retain consistency if one fits a linear regression on each subset. To see this, notice that each sub-model now takes a form of $\tilde Y = \tilde X^{(i)}\beta^{(i)} + \tilde W^{(i)}$ where $ W^{(i)} = \tilde X^{(-i)}\beta^{(-i)} + \tilde \varepsilon$ and $X^{(-i)}$ stands for variables not included in the $i^{th}$ subset. If, for example, we would like to compute the ordinary least squares estimates, it follows
\begin{align*}
\hat\beta^{(i)} &= (\tilde X^{(i)T}\tilde X^{(i)})^{-1}\tilde X^{(i)}\tilde Y\\
& = \beta^{(i)} + (\tilde X^{(i)T}\tilde X^{(i)})^{-1}\tilde X^{(i)}\tilde W^{(i)}\\
& = \beta^{(i)} + (\tilde X^{(i)T}\tilde X^{(i)})^{-1}\tilde X^{(i)}\tilde \varepsilon
\end{align*}
where we retrieve a consistent estimator that converges in rate as if the full dataset were used.
\par
When $p$ is larger than $n$, the new features are no longer exactly orthogonal to each other due to the high dimension. Nevertheless, the correlations between different columns are roughly of the order $\sqrt{\frac{\log p}{n}}$ for random designs, making the new features approximately orthogonal when $\log(p) \ll n$. This allows us to follow the same strategy of partitioning the feature space as in the low dimensional case. It is worth noting that when $p > n$, the SVD decomposition on $X$ induces a different form on the three matrices, i.e., $U$ is now an $n\times n$ orthogonal matrix, $D$ is an $n\times n$ diagonal matrix, $V$ is an $n\times p$ matrix, and $\big(\frac{XX^T}{p}\big)^{\frac{+}{2}}$ becomes $\big(\frac{XX^T}{p}\big)^{-\frac{1}{2}}$. 

In this paper, we primarily focus on datasets where $p$ is so large that a single computer is only able to store and perform operations on an $n\times q$ matrix ($n < q < p$) but not on an $n\times p$ matrix. 
Because the two decorrelation matrices yield almost the same properties, we will only present the algorithm and the theoretical analysis for $(XX^T/p)^{-1/2}$.

The concrete DECO framework consists of two main steps. Assume $X$ has been partitioned column-wisely into $m$ subsets $X^{(i)}, i=1, 2, \cdots, m$ (each with a maximum of $q$ columns) and distributed onto $m$ machines with $Y$. In the first stage, we obtain the decorrelation matrix $(XX^T/p)^{-1/2}$ or $\sqrt{p}D^{-1}U^T$ by computing $XX^T$ in a distributed way as $XX^T = \sum_{i = 1}^m X^{(i)}X^{(i)T}$ and perform the SVD decomposition on $XX^T$ on a central machine. In the second stage, each worker receives the decorrelation matrix, multiplies it to the local data $(Y, X^{(i)})$ to obtain $(\tilde Y, \tilde X^{(i)})$, and fits a penalized regression. When the model is assumed to be exactly sparse, we can potentially apply a refinement step by re-estimating coefficients on all the selected variables simultaneously on the master machine via ridge regression. The details are provided in Algorithm \ref{alg:1}.
\begin{algorithm}[!ht]
  \caption{{\em The DECO framework}}
  \label{alg:1}
  \begin{algorithmic}[1]
    \Require
    \State Input $(Y, X), p, n, m, \lambda_n$. Standardize $X$ and $Y$ to $x$ and $y$ with mean zero;
    \State Partition (arbitrarily) $(y, x)$ into $m$ disjoint subsets $(y, x^{(i)})$ and distribute to $m$ machines;
    \Ensure \textbf{1 : Decorrelation} 
    \State $F = 0$ initialized on the master machine;
    \For{$i = 1$ {\bf to} $m$ }
    \State $F = F + x^{(i)}x^{(i)T}$;
    \EndFor
    \State $\bar F = \sqrt{p}\big(F + r_1I_p\big)^{-1/2}$ on the master machine and then pass back; \label{line:1}
    \For{$i = 1$ {\bf to} $m$ }
    \State $\tilde y = \bar Fy$ and $\tilde x^{(i)} = \bar Fx^{(i)}$;
    \EndFor
    \Ensure \textbf{2 : Estimation}
    \For{$i = 1$ {\bf to} $m$ }
    \State $\hat\beta^{(i)} = arg\min_\beta \frac{1}{n}\|\tilde y - \tilde x^{(i)}\beta\|_2^2 + 2\lambda_n\rho(\beta)$;
    \EndFor
    \State $\hat\beta = (\hat\beta^{(1)}, \hat\beta^{(2)},\cdots,\hat\beta^{(m)})$ on the master machine;
    \State $\hat\beta_0 = mean(Y) - mean(X)^T\hat\beta$ for intercept.
    \Ensure \textbf{3 : Refinement (optional)}
    \If{$\#\{\hat\beta \neq 0\} \geq n$} \label{line:2}
    \State \Comment Sparsification is needed before ridge regression.
    \State $\mathcal{M} = \{k~: |\hat\beta_k|\neq 0\}$;
    \State $\hat\beta_\mathcal{M} = arg\min_\beta \frac{1}{n}\|\tilde y - \tilde x_\mathcal{M}\beta\|_2^2 + 2\lambda_n\rho(\beta)$;
    \EndIf \label{line:3}
    \State $\mathcal{M} = \{k~: |\hat\beta_k|\neq 0\}$;
    \State $\hat\beta_\mathcal{M} = (X_\mathcal{M}^TX_\mathcal{M} + r_2I_{|\mathcal{M}|})^{-1}X_\mathcal{M}^TY$;
    \Return $\hat\beta$;
  \end{algorithmic}
\end{algorithm}

Line \ref{line:2} - \ref{line:3} in Algorithm \ref{alg:1} are added only for the data analysis in Section 5.3, in which $p$ is so massive that $\log(p)$ would be comparable to $n$.  For such extreme cases, the algorithm may not scale down the size of $p$ sufficiently for even obtaining a ridge regression estimator afterwards. Thus,  a further sparsification step is recommended. This differs fundamentally from the merging step in SAM \citep{song2014split} in that DECO needs this step only for extreme cases where $\log (p)\sim n$, while SAM always requires a merging step regardless of the relationship between $n$ and $p$. The condition in Line 16 is barely triggered in our experiments (only in Section 5.3), but is crucial for improving the performance for extreme cases. In Line \ref{line:1}, the algorithm inverts $XX^T + r_1I$ instead of $XX^T$ for robustness, because the rank of $XX^T$ after standardization will be $n - 1$. Using ridge refinement instead of ordinary least squares is also for robustness. The precise choice of $r_1$ and $r_2$ will be discussed in the numerical section. 

Penalized regression fitted using regularization path usually involves a computational complexity of $\mathcal{O}(knp + kd^2)$, where $k$ is the number of path segmentations and $d$ is the number of features selected. Although the segmentation number $k$ could be as bad as $(3^p + 1)/2$ in the worst case \citep{mairal2012complexity}, real data experience suggests that $k$ is on average $\mathcal{O}(n)$ \citep{rosset2007piecewise}, thus the complexity for DECO takes a form of $\mathcal{O}\big(n^3 + n^2\frac{p}{m} + m\big)$ in contrast to the full \emph{lasso} which takes a form of $\mathcal{O}(n^2p)$. 
As $n$ is assumed to be small, using DECO can substantially reduce the computational cost if $m$ is properly chosen.
\section{Theory}
In this section, we provide theoretical justification for DECO on random feature matrices. We specialize our attention to \emph{lasso} due to page limits and will provide the theory on general penalties in the long version. Because the two decorrelation matrices yield similar consistency properties, the theory will be stated only for $(XX^T/p)^{-\frac{1}{2}}$. This section consists of two parts. The first part provides preliminary results for \emph{lasso}, when strong conditions on the feature matrix are imposed. In the second part, we adapt these results to DECO and show that the decorrelated data will automatically satisfy the 
conditions on the feature matrix even when the original features are highly correlated.

\subsection{Deterministic results for \emph{lasso} with conditions on the feature matrix}
Define $Q = \{1,2,\cdots, p\}$ and let $A^c$ be $Q\setminus A$ for any set $A\subseteq Q$. The following theorem provides deterministic conditions for \emph{lasso} on sup-norm convergence, $\ell_2$-norm  convergence and sign consistency.
\begin{theorem}\label{thm:1}
Under model \eqref{eq:lr}, denote the solution to the \emph{lasso} problem as
\begin{align*}
\hat\beta = arg\min_{\beta\in \mathcal{R}^p } \frac{1}{n}\big\|Y - X\beta\big\|_2^2 + 2\lambda_n\|\beta\|_1.
\end{align*}
Define $W = Y - X\beta_*$, where $\beta_*$ is the true value of $\beta$.
For any arbitrary subset $J\subseteq Q$ ($J$ could be $\emptyset$), if $X$ satisfies that
\begin{enumerate}
\item $M_1 \leq |x_i^Tx_i/n|  \leq M_2$, for some $0 < M_1 < M_2$ and all $i$,
\item $\max_{i\neq j} |x_i^Tx_j/n| \leq \min\bigg\{\frac{1}{\gamma_1 s},~ \gamma_2\lambda_n^{q}\bigg\}$, for $\gamma_1>\frac{32}{M_1}, \gamma_2 \geq 0$, $q\geq 0$ and $s = |J|$,
\item $\|\frac{1}{n}X^TW\|_\infty \leq \lambda_n/2$,
\end{enumerate}
then any solution to the \emph{lasso} problem satisfies that
\begin{align*}
\|\hat\beta - \beta_*\|_\infty \leq &\frac{3M_1\gamma_1 + 51}{2M_1(M_1\gamma_1 - 7)}\lambda_n + \frac{4M_1\gamma_1\gamma_2 + 36\gamma_2}{M_1(M_1\gamma_1 - 7)}\|\beta_{*J^c}\|_1\lambda_n^{q} + \frac{8\sqrt{3\gamma_2}}{M_1\sqrt{M_1\gamma_1 - 7}}\|\beta_{*J^c}\|_1^{\frac{1}{2}}\lambda_n^{\frac{1 + q}{2}},
\end{align*}
where $\beta_{*J^c}$ is the sub-vector of $\beta_*$ consisting of coordinates in $J^c$ and
\begin{align*}
\|\hat\beta - \beta_*\|_2^2 \leq &\frac{18\gamma_1^2 s\lambda_n^2}{(M_1\gamma_1 - 32)^2} + 6\lambda_n\|\beta_{*J^c}\|_1 + 32\gamma_2\lambda_n^q\|\beta_{*J^c}\|_1^2.
\end{align*}
Furthermore, if $\beta_{*J^c} = 0$ and $\min_{k\in J}|\beta_{*k}| \geq \frac{2}{M_1}\lambda_n$, then the solution is unique and sign consistent, that is,
\begin{align*}
sign(\hat\beta_k) = sign(\beta_{*k}),~\forall k\in J\quad\mbox{and}\quad \hat\beta_k = 0, ~\forall k\in J^c.
\end{align*}
\end{theorem}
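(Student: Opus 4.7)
The plan is to combine the KKT optimality conditions of the lasso with a standard basic-inequality argument. Let $\Delta = \hat\beta - \beta_*$ and $W = Y - X\beta_*$. By optimality, any lasso solution satisfies the stationarity condition $\frac{1}{n}X^T(Y - X\hat\beta) = \lambda_n z$ with $|z_j|\leq 1$ and $z_j = sign(\hat\beta_j)$ when $\hat\beta_j\neq 0$. Substituting $Y - X\hat\beta = W - X\Delta$ and isolating the diagonal of $X^TX/n$ gives coordinatewise
\begin{align*}
\frac{x_i^Tx_i}{n}\Delta_i = \frac{x_i^TW}{n} - \sum_{j\neq i}\frac{x_i^Tx_j}{n}\Delta_j - \lambda_n z_i.
\end{align*}
Taking absolute values and invoking conditions (1)--(3) yields $M_1|\Delta_i| \leq \tfrac{3}{2}\lambda_n + \gamma_2\lambda_n^{q}\|\Delta\|_1$, so the $\ell_\infty$ bound will follow once $\|\Delta\|_1$ is controlled.

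To bound $\|\Delta\|_1$ (and $\|\Delta\|_2$), I would invoke the basic inequality $\frac{1}{n}\|Y - X\hat\beta\|_2^2 + 2\lambda_n\|\hat\beta\|_1\leq \frac{1}{n}\|Y - X\beta_*\|_2^2 + 2\lambda_n\|\beta_*\|_1$. Expansion together with condition (3) produces $\frac{1}{n}\|X\Delta\|_2^2 \leq \lambda_n\|\Delta\|_1 + 2\lambda_n(\|\beta_*\|_1 - \|\hat\beta\|_1)$. Splitting coordinates into $J$ and $J^c$ and applying the triangle inequality on the right-hand side gives the familiar cone-type bound $\|\Delta_{J^c}\|_1 \leq 3\|\Delta_J\|_1 + 4\|\beta_{*J^c}\|_1$, and hence $\|\Delta\|_1 \leq 4\sqrt{s}\|\Delta\|_2 + 4\|\beta_{*J^c}\|_1$. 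On the other hand, expanding $\|X\Delta\|_2^2/n$ into its diagonal and off-diagonal parts and using the $1/(\gamma_1 s)$ branch of condition (2) produces $\|X\Delta\|_2^2/n \geq M_1\|\Delta\|_2^2 - \|\Delta\|_1^2/(\gamma_1 s)$. Combining these pieces and applying $(a+b)^2\leq 2a^2+2b^2$ gives a quadratic in $\|\Delta\|_2$ whose leading coefficient is positive precisely when $M_1\gamma_1 > 32$, and solving it yields the stated $\ell_2$ bound. The $\gamma_2\lambda_n^q\|\beta_{*J^c}\|_1^2$ term is then absorbed by swapping in the other branch of condition (2) for the cross-term $\lambda_n\|\beta_{*J^c}\|_1$ wherever it is sharper. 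Plugging the resulting bound on $\|\Delta\|_1$ back into the coordinatewise inequality of the previous paragraph delivers the $\ell_\infty$ bound, with the constant $M_1\gamma_1 - 7$ arising because the slack absorbed at this stage is smaller than that needed for the $\ell_2$ argument.

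Finally, for sign consistency with $\beta_{*J^c} = 0$ I would use the primal-dual witness method. I construct a candidate by solving the restricted lasso over $J$ with $\hat\beta_{J^c}=0$; condition (1) gives strict convexity on $J$ and thus uniqueness of this restricted solution. The $\ell_\infty$ bound, combined with $\min_{k\in J}|\beta_{*k}|\geq 2\lambda_n/M_1$, forces $sign(\hat\beta_k) = sign(\beta_{*k})$ for all $k\in J$, while the incoherence in condition (2) together with condition (3) yields strict dual feasibility on $J^c$, certifying that the candidate is the unique global optimum. The main obstacle is the constant bookkeeping that produces the sharp denominators $(M_1\gamma_1-32)^2$ and $M_1\gamma_1-7$, and in particular routing the two branches of condition (2) so that the $\gamma_2\lambda_n^q$ contribution only inflates the terms involving $\|\beta_{*J^c}\|_1$ and does not degrade the $s\lambda_n^2$ rate on the sparse part.
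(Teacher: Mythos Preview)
Your proposal is essentially correct and follows the same architecture as the paper: the cone inequality $\|\Delta_{J^c}\|_1\le 3\|\Delta_J\|_1+4\|\beta_{*J^c}\|_1$ from the basic inequality, a mutual-incoherence restricted-eigenvalue lower bound on $\|X\Delta\|_2^2/n$, a Hölder/KKT upper bound on the same quantity, a quadratic in the norm of $\Delta$, the coordinatewise KKT identity for $\ell_\infty$, and the primal--dual witness for sign recovery.

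Two points of divergence are worth flagging. First, the paper does \emph{not} run a single quadratic in $\|\Delta\|_2$; it proves two separate lower bounds. One (the paper's Lemma~4) controls only $\|\Delta_J\|_2$ and yields the leading constant $M_1-7/\gamma_1$ because the cross term between $\Delta_J$ and $\Delta_{J^c}$ is handled with the $\gamma_2\lambda_n^q$ branch rather than squared via $(a+b)^2\le 2a^2+2b^2$. The other (Lemma~5) controls the full $\|\Delta\|_2$ and gives $M_1-32/\gamma_1$. The $\ell_\infty$ bound then feeds on the sharper $\|\Delta_J\|_2$ estimate, which is why $M_1\gamma_1-7$ appears there while $M_1\gamma_1-32$ governs the $\ell_2$ rate. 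Your single lower bound $\|X\Delta\|_2^2/n\ge M_1\|\Delta\|_2^2-\|\Delta\|_1^2/(\gamma_1 s)$ will only deliver the $32$ constant in both places unless you also isolate $\|\Delta_J\|_2$ separately; this is exactly the bookkeeping you identify as the obstacle.

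Second, in the PDW step you invoke ``the $\ell_\infty$ bound'' to certify $\mathrm{sign}(\hat\beta_k)=\mathrm{sign}(\beta_{*k})$ on $J$. The general $\ell_\infty$ bound from the first part is not sharp enough for this: at $M_1\gamma_1=32$ it gives roughly $3\lambda_n/M_1$, exceeding the $2\lambda_n/M_1$ threshold. The paper instead bounds the restricted-solution perturbation $\Delta_k=e_k^T(X_J^TX_J/n)^{-1}(X_J^TW/n-\lambda_n\,\mathrm{sign}(\beta_{*J}))$ directly, using Varah's bound $\|(X_J^TX_J/n)^{-1}\|_\infty\le (M_1-1/\gamma_1)^{-1}$ for strictly diagonally dominant matrices, which yields $|\Delta_k|\le \tfrac{3}{2}\lambda_n/(M_1-1/\gamma_1)<2\lambda_n/M_1$. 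You will need this sharper route (and the same Varah bound for the dual-feasibility check on $J^c$) to match the stated constant.
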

Theorem \ref{thm:1} partly extends the results in \citet{bickel2009simultaneous} and \citet{lounici2008sup}. The proof is provided in Appendix A. Theorem \ref{thm:1} can lead to some useful results. In particular, we investigate two types of models when $\beta_*$ is either exactly sparse or in an $l_r$-ball defined as $\mathbb{B}(r, R) = \{v\in\mathcal{R}^p:~ \sum_{k = 1}^p |v_k|^r \leq R\}$. For the exactly sparse model, we have the following result.
\begin{corollary}[s-sparse]\label{co:1}
Assume that $\beta_*\in \mathcal{R}^p$ is an s-sparse vector with $J$ containing all non-zero indices. If Condition 1 and 3 in Theorem \ref{thm:1} hold and $\max_{i\neq j} |x_i^Tx_j/n|\leq \frac{1}{\gamma_1 s}$ for some $\gamma_1>32/M_1$, then we have
\begin{align*}
\|\hat\beta - \beta_*\|_\infty &\leq \frac{3M_1\gamma_1 + 51}{2M_1(M_1\gamma_1 - 7)}\lambda_n,\\
\|\hat\beta - \beta_*\|_2^2 &\leq \frac{18\gamma_1^2 s\lambda_n^2}{(M_1\gamma_1 - 32)^2}.
\end{align*}
Further, if $\min_{k\in J}|\beta_k| \geq \frac{2}{M_1}\lambda_n$, then $\hat\beta$ is sign consistent.
\end{corollary}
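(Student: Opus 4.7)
The plan is to derive Corollary \ref{co:1} as an immediate specialization of Theorem \ref{thm:1} by choosing the subset $J$ in Theorem \ref{thm:1} to be the support of $\beta_*$. With this choice, $|J| = s$ and $\beta_{*J^c} = 0$, so the ``approximation residual'' terms in both deterministic bounds of Theorem \ref{thm:1} vanish automatically.

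First I would verify that the three hypotheses of Theorem \ref{thm:1} hold. Condition 1 (eigenvalue-type bound on the diagonals) and Condition 3 (control of $\|X^T W / n\|_\infty$) are assumed verbatim in the corollary. For Condition 2, we need $\max_{i\neq j}|x_i^T x_j/n| \leq \min\{1/(\gamma_1 s),\, \gamma_2 \lambda_n^{q}\}$; the first part is exactly the assumption of the corollary, while the second part is handled by taking $\gamma_2$ arbitrarily large (and $q = 0$, say), so the $\gamma_2 \lambda_n^{q}$ factor never binds. This is the only mildly subtle point, and it is really just a bookkeeping observation rather than a genuine obstacle: because $\|\beta_{*J^c}\|_1 = 0$, the parameter $\gamma_2$ only appears multiplied by $0$ in the final bounds, so the value chosen for it is irrelevant.

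Next I would substitute $\|\beta_{*J^c}\|_1 = 0$ and $|J| = s$ directly into the two bounds of Theorem \ref{thm:1}. In the $\ell_\infty$ bound, the second and third terms are proportional to $\|\beta_{*J^c}\|_1$ and $\|\beta_{*J^c}\|_1^{1/2}$ respectively and therefore vanish, leaving the stated bound $\frac{3M_1\gamma_1 + 51}{2M_1(M_1\gamma_1 - 7)}\lambda_n$. Similarly, in the $\ell_2$ bound the last two terms vanish, leaving $\frac{18\gamma_1^2 s \lambda_n^2}{(M_1\gamma_1 - 32)^2}$.

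For the sign consistency claim, I would invoke the ``furthermore'' part of Theorem \ref{thm:1}: its two hypotheses, namely $\beta_{*J^c} = 0$ and $\min_{k\in J}|\beta_{*k}| \geq \frac{2}{M_1}\lambda_n$, are satisfied by construction ($J$ is the support) and by the additional assumption in the corollary, respectively. The conclusion that $\mathrm{sign}(\hat\beta_k) = \mathrm{sign}(\beta_{*k})$ on $J$ and $\hat\beta_k = 0$ on $J^c$ then follows immediately. Since the whole argument reduces to substituting the exact-sparsity assumption into an already-proved theorem, there is no substantive obstacle; the corollary is essentially a clean restatement of Theorem \ref{thm:1} for the hard-sparse case.
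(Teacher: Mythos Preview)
Your proposal is correct and matches the paper's own proof essentially verbatim: the paper simply says ``take $\gamma_2$ arbitrarily large and $q = 1$; the result follows immediately from Theorem \ref{thm:1}.'' The only cosmetic difference is that you take $q=0$ rather than $q=1$, which is immaterial since every $\gamma_2$- and $q$-dependent term in the bounds is multiplied by a power of $\|\beta_{*J^c}\|_1 = 0$.
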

The sup-norm convergence in Corollary \ref{co:1} resembles the results in \citet{lounici2008sup}. For the $l_r$-ball we have
\begin{corollary}[$l_r-$ ball]\label{co:2}
Assume $\beta_*\in \mathbb{B}(r, R)$. If condition 1 and 3 in Theorem \ref{thm:1} hold and $\max_{i\neq j} |x_i^Tx_j/n|\leq \frac{\lambda_n^r}{\gamma_1 R}$ for some $\gamma_1 > 32/M_1$, then we have
\begin{align*}
\|\hat\beta - \beta_*\|_\infty \leq &\bigg(\frac{3M_1\gamma_1 + 51}{2M_1(M_1\gamma_1 - 7)} + \frac{4M_1\gamma_1 + 36}{M_1(M_1\gamma_1 - 7)} + \frac{8\sqrt{3}}{M_1\sqrt{M_1\gamma_1 - 7}}\bigg)\lambda_n,\\
\|\hat\beta - \beta_*\|_2^2\leq &\bigg(\frac{18\gamma_1^2}{(M_1\gamma_1 - 32)^2} + 38\bigg)R\lambda_n^{2 - r}.
\end{align*}
\end{corollary}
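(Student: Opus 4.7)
}
The plan is to reduce the $\ell_r$-ball case to Theorem~\ref{thm:1} by a soft-thresholding construction of the subset $J$. Fix $r\in(0,1]$ and choose the threshold set
\begin{equation*}
J \;=\; \{k\in Q:\ |\beta_{*k}|>\lambda_n\},\qquad s=|J|.
\end{equation*}
The first step is to control $s$ and $\|\beta_{*J^c}\|_1$ using only the $\ell_r$-ball constraint $\sum_k|\beta_{*k}|^r\le R$. For $s$ I would note that on $J$ every coordinate contributes at least $\lambda_n^r$ to the $\ell_r$ sum, which yields $s\le R\lambda_n^{-r}$. For the tail mass, I would use the inequality $|\beta_{*k}|=|\beta_{*k}|^r\cdot|\beta_{*k}|^{1-r}\le|\beta_{*k}|^r\lambda_n^{1-r}$ valid on $J^c$ (this is where $r\le 1$ is used), giving $\|\beta_{*J^c}\|_1\le R\lambda_n^{1-r}$.

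The second step is to choose the free parameters of Theorem~\ref{thm:1} so that its Condition~2 is implied by the correlation bound assumed here. Setting $q=r$ and $\gamma_2=1/(\gamma_1 R)$, the hypothesized $\max_{i\ne j}|x_i^T x_j/n|\le \lambda_n^r/(\gamma_1 R)$ is exactly $\gamma_2\lambda_n^q$, while the bound $s\le R\lambda_n^{-r}$ just proved gives $1/(\gamma_1 s)\ge \lambda_n^r/(\gamma_1 R)$. Hence
\begin{equation*}
\max_{i\ne j}|x_i^T x_j/n|\ \le\ \min\!\left\{\tfrac{1}{\gamma_1 s},\ \gamma_2\lambda_n^q\right\},
\end{equation*}
so all three hypotheses of Theorem~\ref{thm:1} are in force.

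The third step is a plug-and-simplify calculation. I would substitute $s\le R\lambda_n^{-r}$, $\|\beta_{*J^c}\|_1\le R\lambda_n^{1-r}$, $q=r$, and $\gamma_2=1/(\gamma_1 R)$ into the two deterministic bounds of Theorem~\ref{thm:1}. In the $\ell_\infty$ bound the second term becomes a constant multiple of $\lambda_n^{1-r}\cdot\lambda_n^r=\lambda_n$, and the third term becomes a constant multiple of $\sqrt{\lambda_n^{1-r}}\cdot\lambda_n^{(1+r)/2}=\lambda_n$, so every term is of order $\lambda_n$ and the three prefactors collapse into the stated sum. In the $\ell_2$ bound each of the three terms becomes a multiple of $R\lambda_n^{2-r}$: the first from $s\lambda_n^2\le R\lambda_n^{2-r}$, the second from $\lambda_n\|\beta_{*J^c}\|_1\le R\lambda_n^{2-r}$, and the third from $\lambda_n^r\|\beta_{*J^c}\|_1^2\le R^2\lambda_n^{2-r}$ combined with $\gamma_2=1/(\gamma_1 R)$. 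Using $\gamma_1>32/M_1>1$, the sum $6+32/\gamma_1$ that arises from the last two terms is absorbed into the constant $38$.

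I do not expect a genuine obstacle: the whole argument is a standard reduction of a weakly sparse problem to an effective-sparsity problem by optimal thresholding, and the parameters $(q,\gamma_2)$ are forced once one demands that the correlation bound saturate both halves of Condition~2 simultaneously. The only mild care is in handling the boundary cases $s=0$ (where $J=\emptyset$ is explicitly permitted by Theorem~\ref{thm:1}, and the $1/(\gamma_1 s)$ constraint becomes vacuous) and $r=1$ (where the $\ell_r$ ball degenerates to the $\ell_1$ ball but the bounds $s\le R\lambda_n^{-1}$ and $\|\beta_{*J^c}\|_1\le R$ remain valid).
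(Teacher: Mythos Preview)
Your proposal is correct and follows essentially the same route as the paper: define the threshold set $J=\{k:|\beta_{*k}|\ge\lambda_n\}$, bound $s\le R\lambda_n^{-r}$ and $\|\beta_{*J^c}\|_1\le R\lambda_n^{1-r}$ from the $\ell_r$-ball constraint, verify Condition~2 of Theorem~\ref{thm:1}, and substitute. The only cosmetic difference is the parameterization of the free constants: the paper takes $(q,\gamma_2)=(1,\,1/\|\beta_{*J^c}\|_1)$, whereas you take $(q,\gamma_2)=(r,\,1/(\gamma_1 R))$; both choices make the tail terms in Theorem~\ref{thm:1} collapse to the stated orders, and your choice in fact yields slightly smaller prefactors (by factors of $1/\gamma_1$ or $1/\sqrt{\gamma_1}$) before absorbing them into the displayed constants.
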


\subsection{Results for DECO without conditions on the feature matrix}
In this part, we apply the previous results from the \emph{lasso} to DECO, but without the restrictive conditions on the feature matrix. In particular, we prove the consistency results for the estimator obtained after Stage 2 of DECO, while the consistency of Stage 3 will then follow immediately. Recall that DECO works with the decorrelated data $\tilde X$ and $\tilde Y$, which are distributed on $m$ different machines. Therefore, it suffices for us to verify the conditions needed by \emph{lasso} for all pairs $(\tilde Y, \tilde X^{(i)}), i = 1,2,\cdots,m$. For simplicity, we assume that $\varepsilon$ follows a sub-Gaussian distribution and $X\sim N(0, \Sigma)$ throughout this section, although the theory can be easily extended to the situation where $X$ follows an elliptical distribution and $\varepsilon$ is heavy-tailed. 
\par
As described in the last section, DECO fits the following linear regression on each worker
\begin{align*}
\tilde Y = \tilde X^{(i)}\beta^{(i)} + \tilde W^{(i)},\quad\mbox{and}\quad
\tilde W^{(i)} = \tilde X^{(-i)}\beta^{(-i)} + \tilde \varepsilon,
\end{align*}
where $X^{(-i)}$ stands for variables not included in the $i^{th}$ subset.
To verify Condition 1 and Condition 2 in Theorem \ref{thm:1}, we cite a result from \citet{wang2015consistency} (Lemma \ref{lemma:1} in Appendix C) which proves the boundedness of $M_1$ and $M_2$ and that $\max_{i\neq j}|\tilde x_i^T\tilde x_j^T|/n$ is small. Verifying Condition 3 is the key to the whole proof. Different from the conventional setting, $\tilde W$ now contains non-zero signals that are not independent from the predictors. This requires us to accurately capture the behavior of the following two terms
$\max_{k\in Q}\big|\frac{1}{n}\tilde x_k^T \tilde X^{(-k)}\beta_{*}^{(-k)}\big| $ and $ \max_{k\in Q} \big|\frac{1}{n}\tilde x_k^T\tilde \varepsilon\big|$,
for which we have
\begin{lemma}\label{lemma:2}
Assume that $\varepsilon$ is a sub-Gaussian variable with a $\psi_2$ norm of $\sigma$ and $X\sim N(0, \Sigma)$. Define $\sigma_0^2 = var(Y)$. If $p > c_0 n$ for some $c_0 > 1$, then we have for any $t > 0$
\begin{align*}
&\Pr\bigg(\max_{k\in Q} \frac{1}{n}|\tilde x_k^T\tilde \varepsilon| > \frac{\sigma t}{\sqrt{n}}\bigg) \leq 2p\exp\bigg(-\frac{c_*c_0^2}{2c^*c_2(1 - c_0)^2} t^2\bigg) + 4pe^{-Cn},\\
&\Pr\bigg(\max_{k\in Q}\frac{1}{n}\big|\tilde x_k^T \tilde X^{(-k)}\beta_{*}^{(-k)}\big| \geq \frac{\sqrt{\sigma_0 - \sigma}t}{\sqrt{n}}\bigg)\leq 2p\exp\bigg(-\frac{c_*^3}{2c_4^2c^{*2}}t^2\bigg) + 5pe^{-Cn},
\end{align*}
where $C, c_1, c_2, c_4, c_*, c^*$ are defined in Lemma \ref{lemma:1}.
\end{lemma}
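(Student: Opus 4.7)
The plan is to work on the high-probability event $\mathcal{E}$ supplied by Lemma \ref{lemma:1} on which the eigenvalues of $XX^T/p$ lie in $[c_*,c^*]$, the column norms satisfy $\|x_k\|_2^2\le nc_2$, and the diagonals $\tilde x_k^T\tilde x_k/n$ are bounded; writing $F := (XX^T/p)^{-1/2}$ so that $\tilde X = FX$ and $\tilde\varepsilon = F\varepsilon$, I would absorb $\Pr(\mathcal{E}^c)$ into the additive $4pe^{-Cn}$ and $5pe^{-Cn}$ tails at the end. For the first inequality, since $\varepsilon\perp X$, I would condition on $X$: then $\tilde x_k^T\tilde\varepsilon = x_k^T(XX^T/p)^{-1}\varepsilon$ is a linear functional of the sub-Gaussian vector $\varepsilon$, and hence itself sub-Gaussian with parameter $\sigma\,\|(XX^T/p)^{-1}x_k\|_2$. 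Via the SVD $X=UDV^T$, a short calculation gives $\|(XX^T/p)^{-1}x_k\|_2^2 = p^2(VD^{-2}V^T)_{kk}\le c_*^{-1}\tilde x_k^T\tilde x_k$, which on $\mathcal{E}$ has order $nc^*c_2/c_*^2$. A sub-Gaussian Hoeffding inequality combined with a union bound over $k\in Q$ yields the first claim, with the exact exponent $c_*c_0^2/(2c^*c_2(1-c_0)^2)$ coming from the tightening of $(VV^T)_{kk}$ that Lemma \ref{lemma:1} provides under $p>c_0 n$.

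For the second inequality, direct conditioning fails because $\tilde x_k$ and $\tilde X^{(-k)}\beta_*^{(-k)}$ share the same $X$. My approach rests on two reductions. Setting $B_k := X^{(-k)}(X^{(-k)})^T$ and $\alpha_k := X^{(-k)}\beta_*^{(-k)}$, the Sherman--Morrison identity applied to $XX^T = B_k + x_k x_k^T$ gives
\begin{align*}
\tilde x_k^T\tilde X^{(-k)}\beta_*^{(-k)} \;=\; p\,x_k^T(XX^T)^{-1}\alpha_k \;=\; \frac{p\,x_k^T B_k^{-1}\alpha_k}{1+x_k^T B_k^{-1}x_k},
\end{align*}
and since the denominator exceeds $1$ it suffices to control $p|x_k^T B_k^{-1}\alpha_k|/n$. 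Next I would use the Gaussian regression decomposition $x_k = X^{(-k)}b_k + u_k$ with $b_k := \Sigma_{-k,-k}^{-1}\Sigma_{-k,k}$, $\tau_k^2\le\Sigma_{kk}$, and $u_k\sim N(0,\tau_k^2 I_n)$ independent of $X^{(-k)}$, splitting
\begin{align*}
x_k^T B_k^{-1}\alpha_k \;=\; b_k^T\Pi_{-k}\beta_*^{(-k)} + u_k^T B_k^{-1}\alpha_k,
\end{align*}
where $\Pi_{-k} := (X^{(-k)})^T B_k^{-1} X^{(-k)}$ is the projection onto the row space of $X^{(-k)}$. Conditional on $X^{(-k)}$, the second summand is centered Gaussian with variance at most $\tau_k^2\|\alpha_k\|_2^2/(pc_*)^2$ (invoking Lemma \ref{lemma:1}'s eigenvalue bound on $B_k$, a mild perturbation of the bound on $XX^T$), and a chi-squared concentration of $\|\alpha_k\|_2^2$ around $n(\beta_*^{(-k)})^T\Sigma_{-k,-k}\beta_*^{(-k)}\le n(\sigma_0^2-\sigma^2)$ delivers the variance scaling stated in the lemma.

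The principal obstacle is the first summand $b_k^T\Pi_{-k}\beta_*^{(-k)}$: it is a nonlinear quadratic form in the Gaussian matrix $X^{(-k)}$, so sub-Gaussian Hoeffding does not apply directly. My plan here is a Hanson--Wright-style concentration (or the leverage/projection calculation underpinning the proof of Lemma \ref{lemma:1}), exploiting that $\Pi_{-k}$ is a rank-$n$ projection whose expectation is close to a scaled identity, so that $b_k^T\Pi_{-k}\beta_*^{(-k)}$ fluctuates at scale $O(\sqrt n/p)$ around its small mean and, after the $p/n$ multiplier, still sits within the required envelope. The presence of $c_*^3/(2c_4^2 c^{*2})$ in the final exponent reflects precisely this step, with $c_4$ from Lemma \ref{lemma:1} governing the operator-norm fluctuations of $\Pi_{-k}$. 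A union bound over $k\in Q$, with all auxiliary bad events folded into the additive $5pe^{-Cn}$ term, then completes the argument.
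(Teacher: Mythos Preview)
Your treatment of the first inequality is along the right lines and matches the spirit of the cited Lemma 4 in \citet{wang2015consistency}. The substantive divergence is in the second inequality.

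The paper's proof is a one-step reduction via the rotational equivariance of the MACG distribution. Writing $H = X^T(XX^T)^{-1/2}$, one has $\tfrac{1}{n}\tilde x_k^T\tilde X^{(-k)}\beta_*^{(-k)} = \tfrac{p}{n}\,e_k^T H H^T v$ where $v\in\mathbb{R}^p$ has $v_k=0$. Since $v\perp e_k$, there is an orthogonal $T$ fixing $e_k$ and sending $v$ to $\|v\|_2 e_2$; then $e_k^T HH^T v = \|v\|_2\, e_1^T (TH)(TH)^T e_2$, and $TH$ again follows a MACG law with a rotated covariance sharing the same extreme eigenvalues $c_*,c^*$. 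The off-diagonal bound of Lemma~\ref{lemma:1} applies verbatim to $(TH)(TH)^T$ and yields the claim with the stated exponent $c_*^3/(2c_4^2 c^{*2})$. No conditioning, Sherman--Morrison, or Hanson--Wright is needed.

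Your route via Sherman--Morrison and the regression decomposition $x_k = X^{(-k)}b_k + u_k$ is genuinely different, but the step you flag as the ``principal obstacle'' is not resolved. After dropping the denominator and rescaling by $p/n$, the deterministic part of $b_k^T\Pi_{-k}\beta_*^{(-k)}$ is of order $b_k^T\beta_*^{(-k)}$ (since $\Pi_{-k}$ is a rank-$n$ projection with average diagonal $n/(p-1)$), not something that shrinks like $1/\sqrt n$; calling the mean ``small'' is unjustified, and Hanson--Wright only controls the fluctuation around it. Getting the precise constant $c_*^3/(2c_4^2 c^{*2})$ out of this decomposition would in any case require re-deriving the Lemma~\ref{lemma:1} machinery. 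The paper's rotation trick sidesteps the bias issue entirely by never separating $x_k$ into correlated and independent parts; it keeps the full inner product $e_k^T HH^T v$ intact and exploits that its distribution depends on $v$ only through $\|v\|_2$ and the spectrum of $\Sigma$.
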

We now provide the main results of the paper.
\begin{theorem}[s-sparse]\label{thm:2}
Assume that $\beta_*$ is an s-sparse vector. Define $\sigma_0^2 = var(Y)$. 
For any $A>0$ we choose $\lambda_n = A\sigma_0\sqrt{\frac{\log p}{n}}$. Now if $p > c_0n$ for some $c_0 > 1$ and $64C_0^2A^2s^2\frac{\log p}{n} \leq 1$, then with probability at least $1 - 8p^{1 - C_1A^2} - 18p e^{-Cn}$ we have
\begin{align*}
\|\hat\beta - \beta_*\|_\infty &\leq \frac{5C_0A\sigma_0}{8}\sqrt{\frac{\log p}{n}},\\
 \|\hat\beta - \beta_*\|_2^2 &\leq \frac{9C_0A^2\sigma_0^2}{8} \frac{s\log p}{n},
\end{align*}
where $C_0 = \frac{8c^*}{c_1c_*}$ and $C_1 = \min\{\frac{c_*c_0^2}{8c^*c_2(1 - c_0)^2}, \frac{c_*^3}{8c_4^2c^{*2}}\}$ are two constants and $c_1,c_2,c_4,c_*, c^*, C$ are defined in Lemma \ref{lemma:1}. Furthermore, if we have 
\[
\min_{\beta_k\neq 0}|\beta_k|\geq \frac{C_0A\sigma_0}{4}\sqrt{\frac{\log p}{n}},
\]
then $\hat\beta$ is sign consistent.
\end{theorem}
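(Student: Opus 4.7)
}
The strategy is to show that, after decorrelation, each of the $m$ subproblems
$\tilde Y = \tilde X^{(i)}\beta^{(i)} + \tilde W^{(i)}$ satisfies the three deterministic hypotheses of Corollary \ref{co:1}, then to combine the per-worker bounds into a global bound on $\hat\beta = (\hat\beta^{(1)},\dots,\hat\beta^{(m)})$. Concretely, I would fix the choice $\lambda_n = A\sigma_0\sqrt{\log p/n}$ and work on the intersection of the high-probability events given by Lemma \ref{lemma:1} (which controls the Gram matrix of the decorrelated design) and Lemma \ref{lemma:2} (which controls the cross-products with the effective noise). All probabilistic work happens here; once these events are secured, the rest is deterministic.

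First, I would invoke Lemma \ref{lemma:1} to conclude that on an event of probability at least $1 - c\, p e^{-Cn}$, the decorrelated design satisfies $M_1\le \|\tilde x_k\|_2^2/n\le M_2$ uniformly in $k$, with explicit constants $c_*, c^*$, and the off-diagonal entries obey $\max_{k\ne \ell}|\tilde x_k^T\tilde x_\ell|/n \le C_0\sqrt{\log p/n}/8$ for the constant $C_0=8c^*/(c_1c_*)$ that appears in the statement. The assumption $64 C_0^2 A^2 s^2\log p/n\le 1$ is then exactly what is needed so that this off-diagonal bound is at most $1/(\gamma_1 s)$ for a choice of $\gamma_1$ satisfying $M_1\gamma_1-32$ bounded away from zero. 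This simultaneously certifies Conditions~1 and~2 of Theorem \ref{thm:1} for every subset $\tilde X^{(i)}$, since each subset inherits these bounds from the full decorrelated design.

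Next, the crux is verifying Condition~3, that is, $\|\tilde X^{(i)T}\tilde W^{(i)}/n\|_\infty\le \lambda_n/2$ for every $i$. Because $\tilde W^{(i)}=\tilde X^{(-i)}\beta_*^{(-i)}+\tilde\varepsilon$ is \emph{not} independent of $\tilde X^{(i)}$, this is where the analysis departs from standard \emph{lasso} theory, and this is the step I expect to be the main obstacle. I would handle it via the triangle inequality and the two bounds from Lemma \ref{lemma:2}:
\begin{align*}
\max_{k\in Q}\frac{1}{n}|\tilde x_k^T\tilde W|
\le \max_{k\in Q}\frac{1}{n}|\tilde x_k^T\tilde\varepsilon|
+ \max_{k\in Q}\frac{1}{n}|\tilde x_k^T\tilde X^{(-k)}\beta_*^{(-k)}|.
\end{align*}
Plugging $t=\sqrt{A^2\log p}$ into Lemma \ref{lemma:2} and taking a union bound over $k\in Q$ turns both tail probabilities into $p^{1-C_1 A^2}$-type terms, where $C_1=\min\{c_*c_0^2/(8c^*c_2(1-c_0)^2), c_*^3/(8c_4^2c^{*2})\}$ is exactly the constant stated in the theorem; using $\sqrt{\sigma_0-\sigma}\le\sigma_0$ and $\sigma\le\sigma_0$ absorbs the two noise levels into a single $\sigma_0$. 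Combining this with $\lambda_n=A\sigma_0\sqrt{\log p/n}$ shows that Condition~3 holds on each worker on an event of probability at least $1-8p^{1-C_1A^2}-18pe^{-Cn}$ after a second union bound (the $18pe^{-Cn}$ absorbs both the $4pe^{-Cn}$ and $5pe^{-Cn}$ terms from Lemma \ref{lemma:2}, plus the Lemma \ref{lemma:1} event).

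On this combined event I would apply Corollary \ref{co:1} to each worker. The $\ell_\infty$ bound in that corollary yields $\|\hat\beta^{(i)}-\beta_*^{(i)}\|_\infty\le \tfrac{5C_0 A\sigma_0}{8}\sqrt{\log p/n}$ after plugging in the values of $M_1,\gamma_1$ determined above and simplifying the numerical coefficient $(3M_1\gamma_1+51)/(2M_1(M_1\gamma_1-7))$; since $\|\hat\beta-\beta_*\|_\infty=\max_i\|\hat\beta^{(i)}-\beta_*^{(i)}\|_\infty$, this gives the stated sup-norm bound. For the $\ell_2$ bound, the per-worker estimate gives $\|\hat\beta^{(i)}-\beta_*^{(i)}\|_2^2\le \tfrac{9C_0A^2\sigma_0^2}{8}\,s^{(i)}\log p/n$ with $s^{(i)}=|\{k\in \text{block } i: \beta_{*k}\ne 0\}|$, and summing over $i$ uses $\sum_i s^{(i)}=s$ to yield exactly the stated bound; this additivity is precisely where DECO avoids any degradation in $m$. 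Finally, sign consistency follows directly from Corollary \ref{co:1}: the assumption $\min_{\beta_k\ne 0}|\beta_k|\ge \tfrac{C_0A\sigma_0}{4}\sqrt{\log p/n}$ is, after the same simplification, stronger than the $\min_{k\in J}|\beta_{*k}|\ge 2\lambda_n/M_1$ required there, applied separately on each subset.
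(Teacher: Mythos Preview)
Your proposal is correct and follows essentially the same route as the paper's proof in Appendix~D: verify Conditions~1--3 of Theorem~\ref{thm:1} (equivalently, the hypotheses of Corollary~\ref{co:1}) uniformly over all subsets via Lemma~\ref{lemma:1} and Lemma~\ref{lemma:2}, then apply Corollary~\ref{co:1} per worker with $\gamma_1=64/M_1$ and aggregate by $\max_i$ for the $\ell_\infty$ bound and by $\sum_i s^{(i)}=s$ for the $\ell_2$ bound. The only minor discrepancy is bookkeeping: the paper takes $t=A\sqrt{\log p}/(2\sqrt{2})$ in Lemma~\ref{lemma:2} (so that the combined noise term is bounded by $\lambda_n/2$ rather than $\lambda_n$), and it obtains the off-diagonal bound $A\sqrt{\log p/n}$ directly from Lemma~\ref{lemma:1} rather than your $C_0\sqrt{\log p/n}/8$, but these differences only shift constants and do not affect the argument.
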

Theorem \ref{thm:2} looks a bit surprising since the convergence rate does not depend on $m$. This is mainly because the bounds used to verify the three conditions in Theorem \ref{thm:1} hold uniformly on all subsets of variables. For subsets where no true signals are allocated, \emph{lasso} will estimate all coefficients to be zero under suitable choice of $\lambda_n$, so that the loss on these subsets will be exactly zero. Thus, when summing over all subsets, we retrieve the $\frac{s\log p}{n}$ rate. In addition, it is worth noting that Theorem \ref{thm:2} guarantees the $\ell_\infty$ convergence and sign consistency for \emph{lasso} without assuming the irrepresentable condition \citep{zhao2006model}. A similar but weaker result was obtained in \citet{jia2012preconditioning}.

\begin{theorem}[$l_r$-ball]\label{thm:3}
Assume that $\beta_*\in\mathbb{B}(r, R)$ and all conditions in Theorem \ref{thm:2} except that $64C_0^2A^2s^2\frac{\log p}{n} \leq 1$ are now replaced by $64C_0^2A^2R^2\big(\frac{\log p}{n}\big)^{1 - r} \leq 1$. Then with probability at least $1 - 8p^{1 - C_1A^2} - 18p e^{-Cn}$, we have
\begin{align*}
\|\hat\beta - \beta_*\|_\infty &\leq \frac{3C_0A\sigma_0}{2}\sqrt{\frac{\log p}{n}},\\
\|\hat\beta - \beta_*\|_2^2 &\leq \bigg(\frac{9C_0}{8} + 38\bigg) (A\sigma_0)^{2 - r}R\bigg(\frac{\log p}{n}\bigg)^{1 - \frac{r}{2}}.
\end{align*}
\end{theorem}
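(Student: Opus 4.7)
The strategy is to follow the proof of Theorem \ref{thm:2} almost verbatim, but to replace the invocation of Corollary \ref{co:1} on each subset by Corollary \ref{co:2}. The key additional observation is that if $\beta_*\in\mathbb{B}(r,R)$, then the restriction $\beta_*^{(i)}$ of $\beta_*$ to the $i$-th column-subset lies in a local $l_r$-ball of radius $R_i:=\sum_{k\in\text{subset }i}|\beta_{*k}|^r$, and the exact additivity $\sum_{i=1}^m R_i=R$ is what allows the final aggregate bound to scale with $R$ rather than $mR$.

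First I would show, exactly as in the proof of Theorem \ref{thm:2}, that on the decorrelated data $(\tilde Y,\tilde X)$ Conditions 1 and 3 of Theorem \ref{thm:1} hold uniformly across all $m$ subsets with probability at least $1-8p^{1-C_1A^2}-18pe^{-Cn}$. The diagonal sandwich $M_1\le \tilde x_k^T\tilde x_k/n\le M_2$ and the uniform off-diagonal estimate $\max_{i\ne j}|\tilde x_i^T\tilde x_j/n|\le C_0\sqrt{\log p/n}/(\text{const})$ both come from Lemma \ref{lemma:1}, while Condition 3, $\|\frac{1}{n}\tilde x_k^T\tilde W^{(i)}\|_\infty\le\lambda_n/2$, follows from Lemma \ref{lemma:2} applied to the two summands $\tilde X^{(-i)}\beta_*^{(-i)}$ and $\tilde\varepsilon$ together with the choice $\lambda_n=A\sigma_0\sqrt{\log p/n}$.

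The substantive new ingredient is to recast Condition 2 in the sharper form required by Corollary \ref{co:2}, namely $\max_{i\ne j}|\tilde x_i^T\tilde x_j/n|\le \lambda_n^r/(\gamma_1 R)$. Substituting the Lemma \ref{lemma:1} bound on the max correlation and the definition of $\lambda_n$, and choosing $\gamma_1$ so that the constant factors in Corollary \ref{co:2} collapse to $3C_0/2$ and $9C_0/8+38$ respectively, this requirement reduces to precisely $64C_0^2A^2R^2(\log p/n)^{1-r}\le 1$, which is the hypothesis of the theorem. Since $R_i\le R$ only weakens the condition, Corollary \ref{co:2} then applies on every subset with local radius $R_i$, yielding $\|\hat\beta^{(i)}-\beta_*^{(i)}\|_\infty\le (3C_0A\sigma_0/2)\sqrt{\log p/n}$ uniformly in $i$ and $\|\hat\beta^{(i)}-\beta_*^{(i)}\|_2^2\le (9C_0/8+38)\,R_i\lambda_n^{2-r}$.

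Aggregating across machines finishes the proof. The $\ell_\infty$ bound passes through unchanged because $\|\hat\beta-\beta_*\|_\infty=\max_i\|\hat\beta^{(i)}-\beta_*^{(i)}\|_\infty$, and summing the $\ell_2^2$ bounds gives $\|\hat\beta-\beta_*\|_2^2\le (9C_0/8+38)(\sum_i R_i)\lambda_n^{2-r}=(9C_0/8+38)(A\sigma_0)^{2-r}R(\log p/n)^{1-r/2}$, as claimed. The main obstacle is the constant-tracking in the previous paragraph: the product $\gamma_1 R$ must be tuned so that the coefficient on $\lambda_n$ in the Corollary \ref{co:2} $\ell_\infty$ bound, the coefficient of $R\lambda_n^{2-r}$ in its $\ell_2$ bound, and the required upper bound on $R$ all collapse simultaneously to the clean multiples of $C_0$ stated in the theorem; once the constants from Lemma \ref{lemma:1} are propagated carefully, the remainder of the argument is a direct translation of the exactly-sparse analysis in Theorem \ref{thm:2}.
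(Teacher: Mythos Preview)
Your proposal is correct and follows essentially the same route as the paper: verify Conditions 1 and 3 of Theorem \ref{thm:1} via Lemmas \ref{lemma:1} and \ref{lemma:2} exactly as in Theorem \ref{thm:2}, check that the off-diagonal bound from Lemma \ref{lemma:1} satisfies the $\lambda_n^r/(\gamma_1 R)$ requirement of Corollary \ref{co:2} under the hypothesis $64C_0^2A^2R^2(\log p/n)^{1-r}\le 1$ with $\gamma_1=64/M_1$, apply Corollary \ref{co:2} on each subset with local radius $R_l$ (using $R_l\le R$), and aggregate via $\sum_l R_l=R$ for the $\ell_2^2$ bound and a max for the $\ell_\infty$ bound. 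The paper carries out exactly this argument, proving Theorems \ref{thm:2} and \ref{thm:3} simultaneously.
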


Note that $\sigma_0^2 = var(Y)$ instead of $\sigma$ appears in the convergence rate in both Theorem \ref{thm:2} and \ref{thm:3}, which is inevitable due to the nonzero signals contained in $\tilde W$. Compared to the estimation risk using full data, the results in Theorem \ref{thm:2} and \ref{thm:3} are similar up to a factor of $\sigma^2/\sigma_0^2 = 1 - \hat R^2$, where $\hat R^2$ is the coefficient of determination. Thus, for a model with an $\hat R^2 = 0.8$, the risk of DECO is upper bounded by five times the risk of the full data inference. The rates in Theorem \ref{thm:2} and \ref{thm:3} are nearly minimax-optimal \citep{ye2010rate,raskutti2009minimax}, but the sample requirement $n \asymp s^2$ is slightly off the optimal. This requirement is rooted in the $\ell_\infty$-convergence and sign consistency and is almost unimprovable for random designs. We will detail this argument in the long version of the paper.

\section{Experiments}
In this section, we present the empirical performance of DECO via extensive numerical experiments. In particular, we compare DECO after 2 stage fitting ({\bf DECO-2}) and DECO after 3 stage fitting ({\bf DECO-3}) with the full data \emph{lasso} ({\bf lasso-full}), the full data \emph{lasso} with ridge refinement ({\bf lasso-refine}) and \emph{lasso} with a naive feature partition without decorrelation ({\bf lasso-naive}). This section consists of three parts. In the first part, we run {\bf DECO-2} on some simulated data and monitor its performance on one randomly chosen subset that contains part of the true signals. In the second part, we verify our claim in Theorem \ref{thm:2} and \ref{thm:3} that the accuracy of DECO does not depend on the subset number. In the last part, we provide a comprehensive evaluation of DECO's performance by comparing DECO with other methods under various correlation structures. 
\par
The synthetic datasets are from model \eqref{eq:lr} with $X\sim N(0, \Sigma)$ and $\varepsilon\sim N(0, \sigma^2)$. The variance $\sigma^2$ is chosen such that $\hat R^2 = var(X\beta)/var(Y)=0.9$. For evaluation purposes, we consider five different structures of $\Sigma$ as below.

\par
\noindent \textbf{Model (i) \emph{Independent predictors}}.
The support of $\beta$ is $S=\{1,2,3,4,5\}$. We generate $X_i$ from a standard multivariate
normal distribution with independent components. The coefficients are specified as
\[
\beta_i = \left\{
\begin{array}{ll}
(-1)^{Ber(0.5)}\bigg(|N(0,1)|+5\sqrt{\frac{\log p}{n}}\bigg) & i\in S\\
0 & i\not\in S.
\end{array}
\right.
\]

\noindent \textbf{Model (ii) \emph{Compound symmetry}}. 
All predictors are equally correlated with correlation
$\rho = 0.6$. The coefficients are the same as those in Model (i).

\noindent \textbf{Model (iii) \emph{Group structure}}. 
This example is Example 4 in \citet{zou2005regularization}, for which we
allocate the 15 true variables into three groups. Specifically, the predictors are generated as
$x_{1+3m} = z_1+N(0, 0.01)$, $x_{2+3m} = z_2+N(0, 0.01)$ and $x_{3+3m} = z_3+N(0, 0.01)$,
where $m=0,1,2,3,4$ and $z_i\sim N(0,1)$ are independent. The coefficients are set as
$ \beta_i = 3, ~i=1,2,\cdots,15;~ \beta_i = 0, ~i=16,\cdots,p$.

\noindent \textbf{Model (iv) \emph{Factor models}}. 
This model is considered in \citet{meinshausen2010stability}. Let $\phi_j,
j=1,2,\cdots,k$ be independent standard normal variables. We set
predictors as $x_i = \sum_{j=1}^k \phi_j f_{ij}+\eta_i$, where $f_{ij}$ and
$\eta_i$ are independent standard normal random variables. The
number of factors is chosen as $k=5$ in the simulation while
the coefficients are specified the same as in Model (i).

\noindent \textbf{Model (v) \emph{$\ell_1$-ball}}.
This model takes the same correlation structure as Model (ii), with the coefficients
drawn from Dirichlet distribution
$\beta \sim Dir\big(\frac{1}{p}, \frac{1}{p}, \cdots, \frac{1}{p}\big)\times 10$.
This model is to test the performance under a weakly sparse assumption on $\beta$, since $\beta$
is non-sparse satisfying $\|\beta\|_1 = 10$.
\par
Throughout this section, the performance of all the methods is evaluated in terms of four metrics: the number of false positives ({\bf \# FPs}), 
the number of false negatives ({\bf \# FNs}), the mean squared error $\|\hat\beta - \beta_*\|_2^2$ ({\bf MSE}) and 
the computational time ({\bf runtime}). We use \texttt{glmnet} \citep{friedman2010regularization} to fit \emph{lasso} and choose the tuning parameter 
using the extended BIC criterion \citep{chen2008extended} with $\gamma$ fixed at $0.5$. For DECO, the features are partitioned randomly in Stage 1
and the tuning parameter $r_1$ is fixed at $1$ for {\bf DECO-3}. Since {\bf DECO-2} does not involve any refinement step, we choose $r_1$ to be $10$
to aid robustness. The ridge parameter $r_2$ 
is chosen by 5-fold cross-validation for both {\bf DECO-3} and {\bf lasso-refine}. All the algorithms 
are coded and timed in \emph{Matlab} on computers with Intel i7-3770k cores. For any embarrassingly parallel algorithm, we
report the preprocessing time plus the longest runtime of a single machine as its runtime.

\subsection{Monitor DECO on one subset}
In this part, using data generated from Model (ii), we illustrate the performance of DECO on one randomly chosen subset after partitioning. 
The particular subset we examine contains two nonzero coefficients $\beta_1$ and $\beta_2$ 
with $98$ coefficients, randomly chosen, being zero. We either fix $p = 10,000$ and change $n$ from $100$ to $500$, or 
fix $n$ at $500$ and change $p$ from $2,000$ to $10,000$ to simulate datasets. 
We fit {\bf DECO-2}, {\bf lasso-full} and {\bf lasso-naive} to $100$ simulated datasets, and monitor their performance on that particular subset.
The results are shown in Fig \ref{fig:1} and \ref{fig:2}.
\begin{figure}[!tb]
\centering
\includegraphics[height = 12cm]{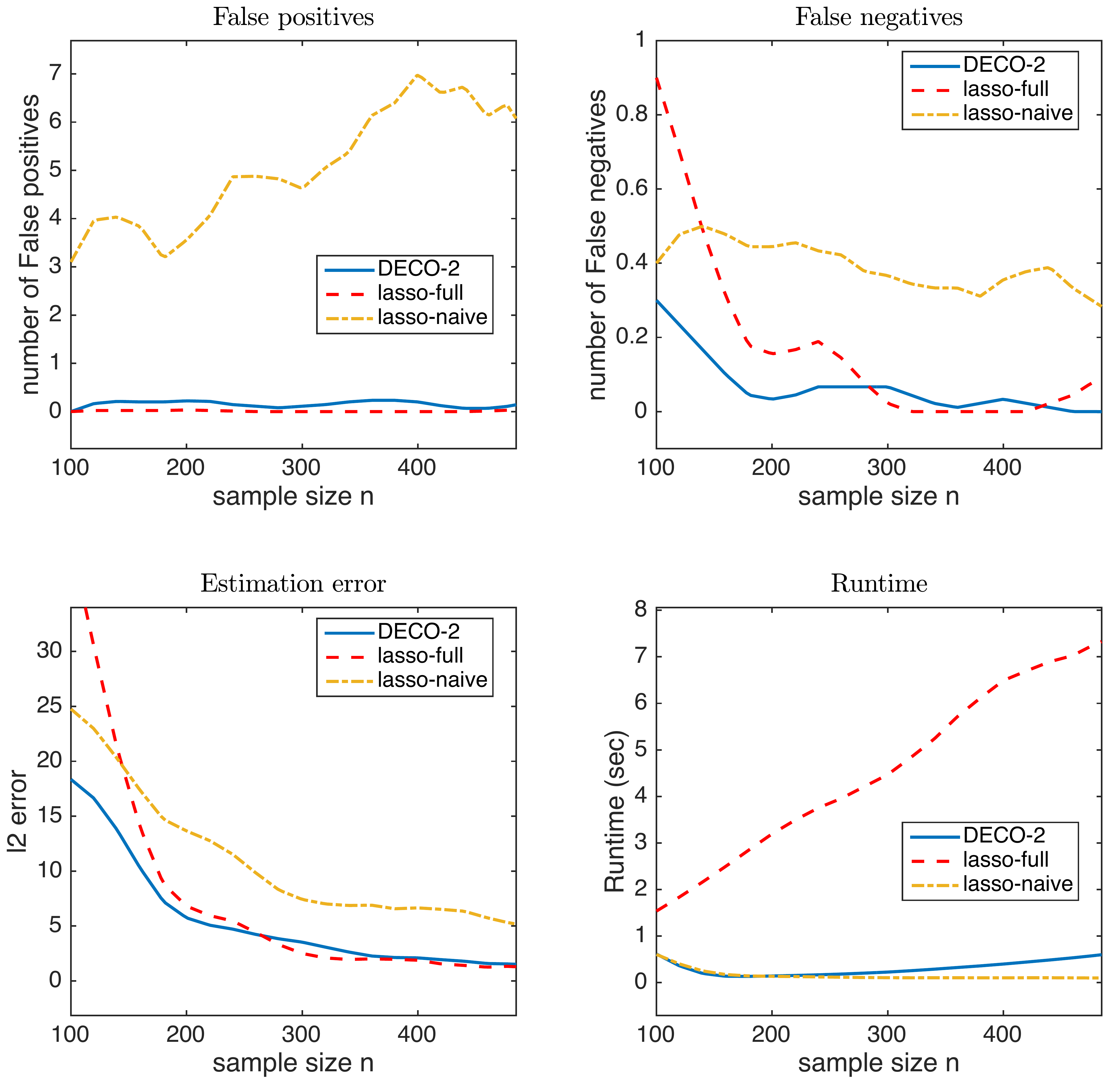}
\caption{Performance of DECO on one subset with $p = 10,000$ and different $n's$.}
\label{fig:1}
\end{figure}

\begin{figure}[!tb]
\centering
\includegraphics[height = 12cm]{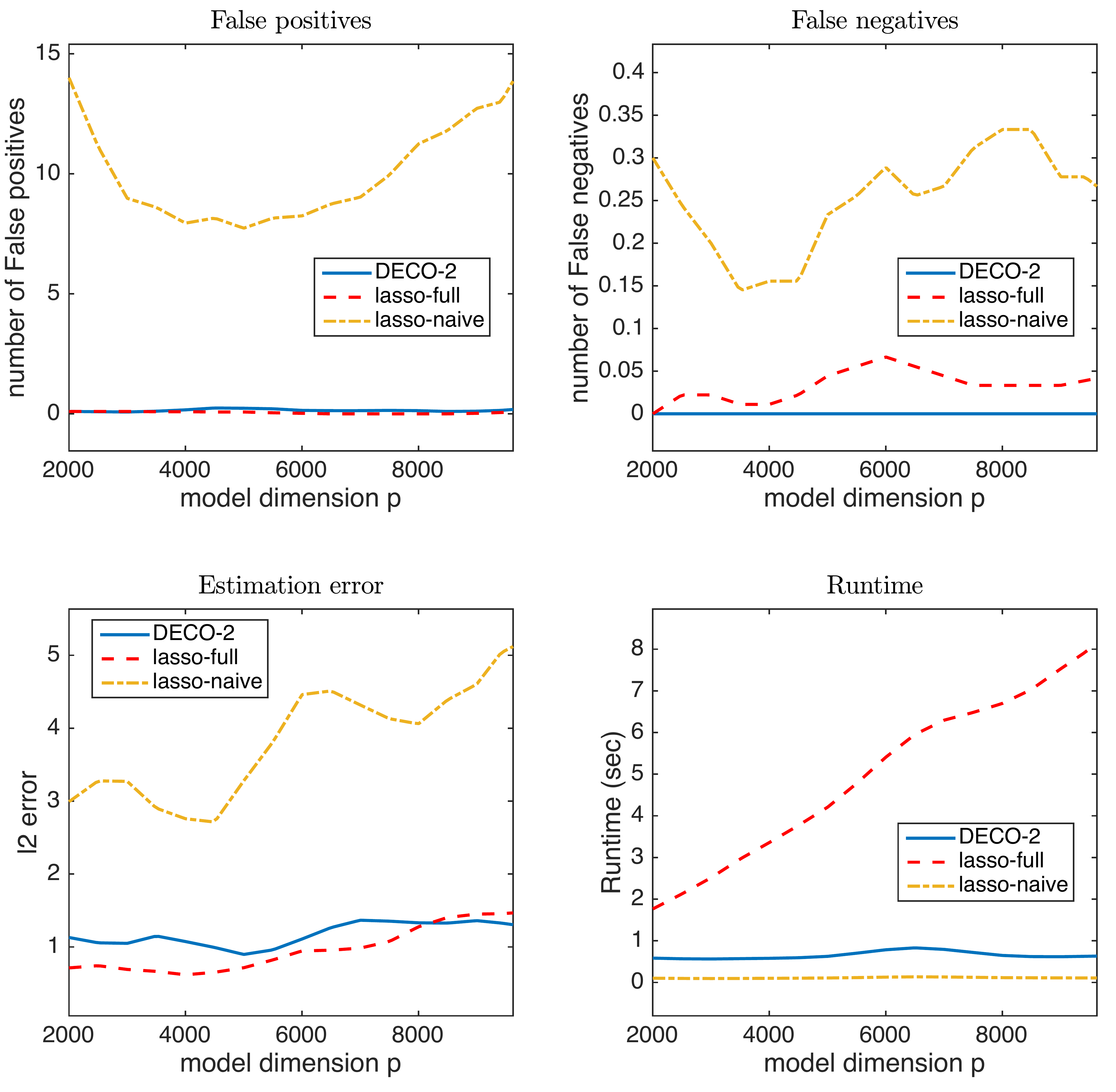}
\caption{Performance of DECO on one subset with $n = 500$ and different $p's$.}
\label{fig:2}
\end{figure}

It can be seen that, though the sub-model on each subset is mis-specified,  DECO performs as if the full dataset were
used as its performance is on par with {\bf lasso-full}. On the other hand,  {\bf lasso-naive} fails completely. This result clearly highlights the 
advantage of decorrelation before feature partitioning.

\subsection{Impact of the subset number $m$}
As shown in Theorem \ref{thm:2} and \ref{thm:3}, the performance of DECO does not depend on the number
of partitions $m$. We verify this property by using Model (ii) again. This time, we fix $p = 10,000$ and $n = 500$,
and vary $m$ from $1$ to $200$. We compare the performance of {\bf DECO-2} and {\bf DECO-3} with {\bf lasso-full}
and {\bf lasso-refine}. The averaged results from $100$ simulated datasets are plotted in Fig \ref{fig:3}. 
\begin{figure}[!tb]
\centering
\includegraphics[height = 12cm]{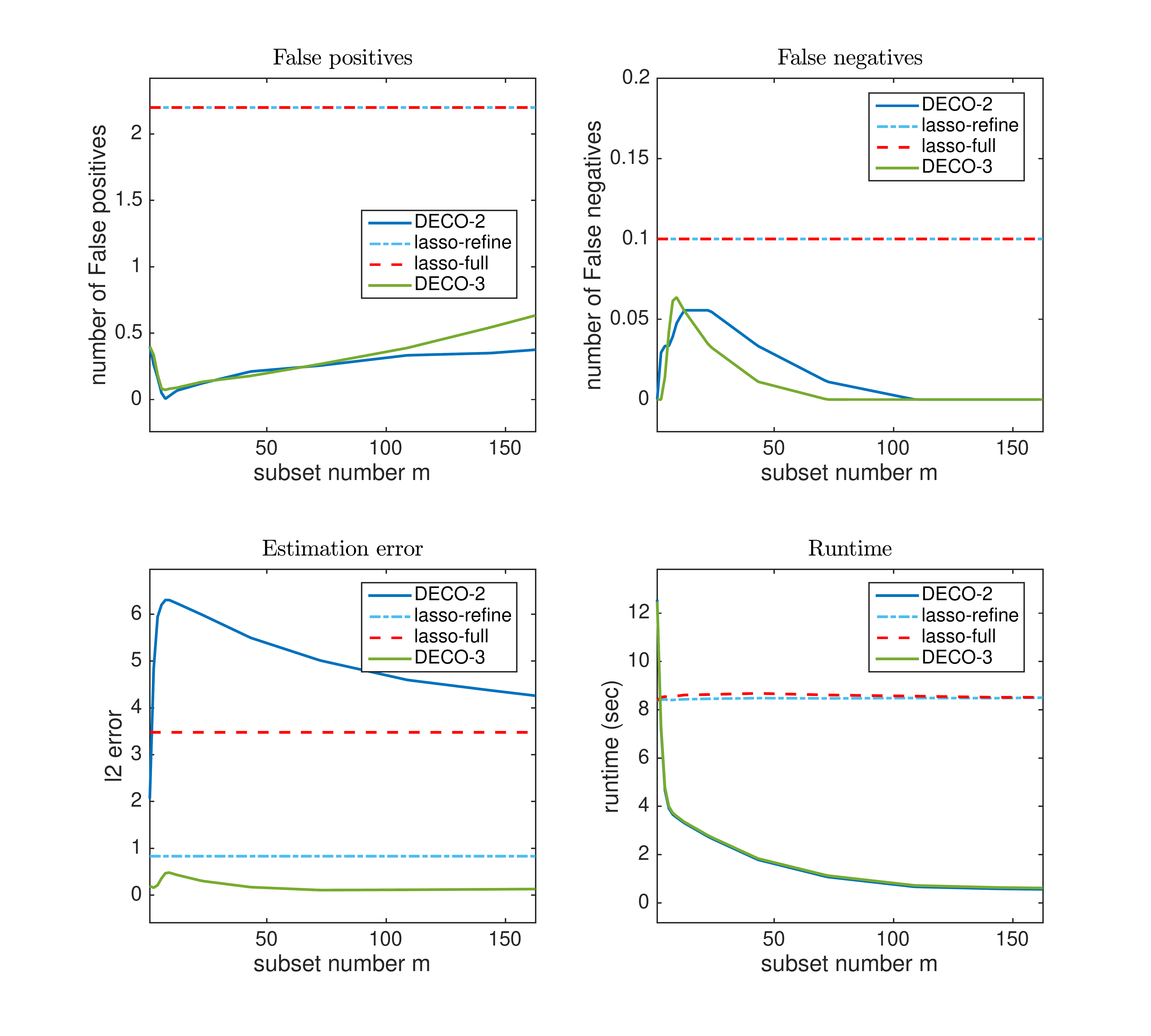}
\caption{Performance of DECO with different number of subsets.}
\label{fig:3}
\end{figure}
Since $p$ and $n$ are both fixed, {\bf lasso-full} and {\bf lasso-refine} are expected to perform stably over different $m'$s.  {\bf DECO-2} and {\bf DECO-3} also maintain a stable performance regardless of the value of $m$. In addition, {\bf DECO-3}
achieves a similar performance to and sometimes better accuracy than {\bf lasso-refine}, possibly because the irrepresentable condition is satisfied 
after decorrelation (See the discussions after Theorem \ref{thm:2}).

\subsection{Comprehensive comparison}
In this section, we compare all the methods under the five different correlation structures. The model dimension and the sample size are
fixed at $p = 10,000$ and $n = 500$ respectively and the number of subsets is fixed as $m = 100$. For each model, we simulate
$100$ synthetic datasets and record the average performance in Table \ref{tab:1}

\begin{table}[!htbp]
\caption{Results for five models with $(n, p) = (500, 10000)$}
\label{tab:1}
\begin{center}
\begin{tabular}{l|l|ccccc}
\hline
Model      &          & {\bf DECO-3} & {\bf DECO-2} & {\bf lasso-refine} & {\bf lasso-full} & {\bf lasso-naive}\\
\hline\hline
           & MSE      &  0.102    & 3.502     & 0.104       & 0.924        & 3.667\\
(i)  & \# FPs   &  0.470    & 0.570     & 0.420       & 0.420        & 0.650\\
           & \# FNs   &  0.010    & 0.020     & 0.000       & 0.000        & 0.010\\
           & Time  &  65.5     & 60.3      & 804.5        & 802.5         & 9.0\\
\hline
           & MSE      &  0.241    & 4.636     & 1.873       & 3.808        & 171.05\\
(ii) & \# FPs   &  0.460    & 0.550     & 2.39       & 2.39        & 1507.2\\
           & \# FNs   &  0.010    & 0.030     & 0.160       & 0.160        & 1.290\\
           & Time  &  66.9     & 61.8      & 809.2        & 806.3         & 13.1\\
\hline
           & MSE     &  6.620    & 1220.5     & 57.74       & 105.99       & 1235.2\\
(iii)& \# FPs   &  0.410    & 0.570     & 0.110       & 0.110        & 1.180 \\
           & \# FNs   &  0.130    & 0.120     & 3.93       & 3.93       & 0.110\\
           & Time  &  65.5     & 60.0      & 835.3        & 839.9         & 9.1\\
\hline
           & MSE     &  0.787    & 5.648     & 11.15       & 6.610        & 569.56\\
(iv)&  \# FPs   &  0.460    & 0.410     & 19.90       & 19.90        & 1129.9\\
           & \# FNs   &  0.090    & 0.100     & 0.530       & 0.530       & 1.040\\
           & Time  &  69.4     & 64.1      & 875.1        & 880.0         & 14.6\\
\hline
(v)  & MSE     &  ---    & 2.341    & ---       & 1.661        & 356.57\\
           & Time  &  ---     & 57.5     & ---        & 829.5         & 13.3\\
\hline
\hline
\end{tabular}
\end{center}
\end{table}


Several conclusions can be drawn from Table \ref{tab:1}. First, when all variables are independent as in Model (i), 
{\bf lasso-naive} performs similarly to {\bf DECO-2} because no decorrelation is needed in this simple case. 
However, {\bf lasso-naive} fails completely for the other four models when correlations are presented. Second,
{\bf DECO-3} achieves the overall best performance. The better estimation error over {\bf lasso-refine} is due
to the better variable selection performance, since the irrepresentable condition is not needed for DECO. Finally,
{\bf DECO-2} performs similarly to {\bf lasso-full} and the difference is as expected according to the discussions after Theorem \ref{thm:3}.

\section{Real data}
We illustrate the competitve performance of DECO via three real datasets that cover a range of high
dimensionalities, by comparing {\bf DECO-3} to {\bf lasso-full}, {\bf lasso-refine} and {\bf lasso-naive}
in terms of prediction error and computational time. The algorithms are configured in the same way as in Section 4. Although DECO allows arbitrary partitioning
(not necessarily random) over the feature space, for simplicity, we confine our attention to random partitioning.
In addition, we perform {\bf DECO-3} multiple times on the same dataset to ameliorate the uncertainty due to the randomness in partitioning.

\subsection{Student performance dataset} 
We look at one of the two datasets used for evaluating student achievement in two Portuguese schools \citep{cortez2008using}. The data attributes include school related features that were collected by using school reports and questionnaires. The particular dataset used here provides the students' performance in mathematics. The goal of the research is to predict the final grade (range from 0 to 20). The original data set contains 395 students and 32 raw attributes. The raw attributes are recoded as 40 attributes and form 767 features after adding interaction terms. To reduce the conditional number of the feature matrix, we remove features that are constant, giving 741 features. We standardize all features and randomly partition them into 5 subsets for DECO. To compare the performance of all methods, we use 10-fold cross validation and record the prediction error (mean square error, MSE), model size and runtime. The averaged results are summarized in Table \ref{tab:2}. We also report the performance of the null model which predicts the final grade on the test set using the mean final grade in the training set.
\begin{table}[!h]
\centering
\caption{The results of all methods for student performance data with $(n, p, m) = (395, 741, 5)$}
\label{tab:2}
\vskip 0.15in
\begin{tabular}{l|ccc}
\hline
                    & MSE & Model size & runtime\\
\hline\hline
{\bf DECO-3}        & \textbf{3.64} & 1.5          & 37.0\\
{\bf lasso-full}    & 3.79 & 2.2        & 60.8\\
{\bf lasso-refine}  & 3.89 & 2.2        & 70.9\\
{\bf lasso-naive}   & 16.5 & 6.4        & 44.6\\
Null                & 20.7 & ---        & ---\\
\hline
\hline
\end{tabular}
\end{table}

\subsection{Mammalian eye diseases}
This dataset, taken from \citet{scheetz2006regulation}, was collected to study mammalian eye diseases, with gene expression for the eye tissues of 120
twelve-week-old male F2 rats recorded. One gene coded as TRIM32 responsible for causing Bardet-Biedl syndrome is the response of interest. Following the method in \citet{scheetz2006regulation}, 18,976 probes were selected as they exhibited sufficient signal for reliable analysis and at least 2-fold variation in expressions, and we confine our attention to the top 5,000 genes with the
highest sample variance. The 5,000 genes are standardized and partitioned into 100 subsets for DECO. The performance is assessed via 10-fold cross validation following the same approach in Section 5.1. The results are summarized in Table \ref{tab:3}. As a reference, we also report these values for the null model.

\begin{table}[!h]
\centering
\caption{The results of all methods for mammalian eye diseases with $(n, p, m) = (120, 5000, 100)$}
\label{tab:3}
\vskip 0.15in
\begin{tabular}{l|ccc}
\hline
                    & MSE & Model size & runtime\\
\hline\hline
{\bf DECO-3}        & 0.012 & 4.3          & 9.6\\
{\bf lasso-full}    & 0.012 & 11        & 139.0\\
{\bf lasso-refine}  & \textbf{0.010} & 11        & 139.7\\
{\bf lasso-naive}   & 37.65 & 6.8        & 7.9\\
Null                & 0.021 & ---        & ---\\
\hline
\hline
\end{tabular}
\end{table}

\subsection{Electricity load diagram}
This dataset \citep{Trindade:2014} consists of electricity load from 2011 - 2014 for 370 clients. The data are originally recorded in KW for every 15 minutes, resulting in 14,025 attributes. Our goal is to predict the most recent electricity load by using all previous data points. The variance of the 14,025 features ranges from 0 to $10^7$. To reduce the conditional number of the feature matrix, we remove features whose variances are below the lower $10\%$ quantile (a value of $10^5$) and retain 126,231 features. We then expand the feature sets by including the interactions between the first 1,500 attributes that has the largest correlation with the clients' most recent load. The resulting 1,251,980 features are then partitioned into 1,000 subsets for DECO. Because cross-validation is computationally demanding for such a large dataset, we put the first 200 clients in the training set and the remaining 170 clients in the testing set. We also scale the value of electricity load between 0 and 300, so that patterns are more visible. The results are summarized in Table \ref{tab:4}.

\begin{table}[!h]
\centering
\caption{The results of all methods for electricity load diagram data with $(n, p, m) = (370, 1251980, 1000)$}
\label{tab:4}
\vskip 0.15in
\begin{tabular}{l|ccc}
\hline
                    & MSE & Model size & runtime\\
\hline\hline
{\bf DECO-3}        & \textbf{0.691} & 4          & 67.9\\
{\bf lasso-full}    & 2.205 & 6          & 23,515.5\\
{\bf lasso-refine}  & 1.790 & 6          & 22,260.9\\
{\bf lasso-naive}   & $3.6\times 10^8$ & 4966        & 52.9\\
Null                & 520.6 & ---        & ---\\
\hline
\hline
\end{tabular}
\end{table}

\section{Concluding remarks}
In this paper, we have proposed an embarrassingly parallel framework named DECO for distributed estimation. DECO is shown to be theoretically attractive, empirically competitive and is straightforward to implement. In particular, we have shown that DECO achieves the same minimax convergence rate as if the full data were used and the rate does not depend on the number of partitions. We demonstrated the empirical performance of DECO via extensive experiments and compare it to various approaches for fitting full data. As illustrated in the experiments, DECO can not only reduce the computational cost substantially, but often outperform the full data approaches in terms of model selection and parameter estimation.
\par
Although DECO is designed to solve large-$p$-small-$n$ problems, it can be extended to deal with large-$p$-large-$n$ problems by adding a sample space partitioning step, for example, using the \emph{message} approach \citep{wang2014median}. More precisely, we first partition the large-$p$-large-$n$ dataset in the sample space to obtain $l$ row blocks such that each becomes a large-$p$-small-$n$ dataset. We then partition the feature space of each row block  into $m$ subsets. This procedure is equivalent to partitioning the original data matrix $X$ into $l\times m$ small blocks, each with a feasible size that can be stored and fitted in a computer. We then apply the DECO framework to the subsets in the same row block using Algorithm  \ref{alg:1}. The last step is to apply the \emph{message} method to aggregate the $l$ row block estimators to output the final estimate.  This extremely scalable approach will be explored in future work.

\bibliography{reference}

\newpage
\section*{Appendix A: Proof of Theorem \ref{thm:1}}
\subsection*{A.1 Proof of the $\ell_2$ and $\ell_\infty$ convergence}
We first need the following lemmas
\begin{lemma}\label{lemma:3}
Assuming the Condition 3 in Theorem \ref{thm:1}, and defining $\Delta = \hat\beta - \beta_*$, where $\hat\beta$ is the solution to the lasso problem and $\beta_*$ is the true value, then for any set $J\subseteq Q$ ($J$ could be $\emptyset$), where $Q = \{1,2,\cdots, p\}$, we have
\begin{align}
\|\Delta_{J^c}\|_1 \leq 3\|\Delta_J\|_1 + 4\|\beta_{*J^c}\|_1, \label{eq:lemma1.2}
\end{align}
where $\Delta_J$ denotes a sub-vector of $\Delta$ containing coordinates whose indexes belong to $J$ and $\|\Delta_\emptyset\|_1 = 0$.
\end{lemma}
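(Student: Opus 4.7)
The plan is to exploit the definition of $\hat\beta$ as the minimizer of the \emph{lasso} objective and turn the resulting basic inequality into a cone-type bound by splitting every $\ell_1$ norm over the partition $(J, J^c)$. This is a standard ``first step'' in all \emph{lasso} analyses and Condition 3 will enter exactly where we control the stochastic noise term.

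First, I would use the optimality of $\hat\beta$ against the feasible competitor $\beta_*$ to write
\begin{equation*}
\tfrac{1}{n}\|Y - X\hat\beta\|_2^2 + 2\lambda_n\|\hat\beta\|_1 \;\leq\; \tfrac{1}{n}\|Y - X\beta_*\|_2^2 + 2\lambda_n\|\beta_*\|_1 .
\end{equation*}
Substituting $Y = X\beta_* + W$ and $\hat\beta = \beta_* + \Delta$, expanding the square, and dropping the nonnegative term $\tfrac{1}{n}\|X\Delta\|_2^2$, I obtain
\begin{equation*}
0 \;\leq\; \tfrac{2}{n} W^T X\Delta + 2\lambda_n\bigl(\|\beta_*\|_1 - \|\hat\beta\|_1\bigr) .
\end{equation*}

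Second, I would bound the noise cross-term via H\"older's inequality together with Condition 3, giving $\tfrac{2}{n} W^T X\Delta \leq 2\|\tfrac{1}{n} X^T W\|_\infty \|\Delta\|_1 \leq \lambda_n\|\Delta\|_1$. Splitting $\|\Delta\|_1 = \|\Delta_J\|_1 + \|\Delta_{J^c}\|_1$ yields
\begin{equation*}
0 \;\leq\; \lambda_n\bigl(\|\Delta_J\|_1 + \|\Delta_{J^c}\|_1\bigr) + 2\lambda_n\bigl(\|\beta_*\|_1 - \|\hat\beta\|_1\bigr) .
\end{equation*}

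Third, I would decompose $\|\beta_*\|_1 - \|\hat\beta\|_1$ over $J$ and $J^c$ using the reverse triangle inequality in two directions: on $J$, $\|\beta_{*J}\|_1 - \|\hat\beta_J\|_1 \leq \|\Delta_J\|_1$, and on $J^c$, the inequality $\|\hat\beta_{J^c}\|_1 \geq \|\Delta_{J^c}\|_1 - \|\beta_{*J^c}\|_1$ gives $\|\beta_{*J^c}\|_1 - \|\hat\beta_{J^c}\|_1 \leq 2\|\beta_{*J^c}\|_1 - \|\Delta_{J^c}\|_1$. Plugging in and simplifying leads to
\begin{equation*}
0 \;\leq\; \lambda_n\bigl(3\|\Delta_J\|_1 - \|\Delta_{J^c}\|_1 + 4\|\beta_{*J^c}\|_1\bigr),
\end{equation*}
which rearranges to the claimed inequality $\|\Delta_{J^c}\|_1 \leq 3\|\Delta_J\|_1 + 4\|\beta_{*J^c}\|_1$.

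There is no real obstacle here: the argument is entirely deterministic once Condition 3 is assumed, and the only mildly delicate step is keeping the signs straight in the two reverse-triangle-inequality applications on $J^c$ (the $+4\|\beta_{*J^c}\|_1$ arises precisely because both $+\|\beta_{*J^c}\|_1$ and $-(-\|\beta_{*J^c}\|_1)$ get absorbed through the $\beta_*\mp\hat\beta$ bookkeeping, then doubled by the factor $2\lambda_n$). The case $J = \emptyset$ is already covered by this argument with the convention $\|\Delta_\emptyset\|_1 = 0$, so no separate treatment is required.
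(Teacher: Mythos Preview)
Your proposal is correct and follows essentially the same route as the paper: both start from the basic optimality inequality for $\hat\beta$ versus $\beta_*$, control the cross term $\tfrac{2}{n}W^TX\Delta$ via H\"older and Condition~3, and then split all $\ell_1$ norms over $(J,J^c)$ using the (reverse) triangle inequality to isolate $\|\Delta_{J^c}\|_1$. The only cosmetic difference is that the paper first records the intermediate bound $2\|\hat\beta\|_1 \leq 2\|\beta_*\|_1 + \|\Delta\|_1$ and treats $J=\emptyset$ as a separate case, whereas you fold everything into one pass; your observation that the empty-$J$ case requires no special handling is correct.
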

\begin{proof}[Proof of Lemma \ref{lemma:3}]
We follow the proof in \citep{bickel2009simultaneous} and \citep{lounici2008sup}. Define $\hat S(\hat\beta) = \{k:~\hat\beta_k \neq 0\}$. The sufficient and necessary condition (also known as the KKT conditions) for $\hat\beta$ to minimize the \emph{lasso} problem is that
\begin{align*}
\frac{1}{n}(x_i^TY - x_i^TX\hat\beta) &= \lambda_n sign(\hat\beta_i),~ \mbox{for } i\in \hat S(\hat\beta)\\
\frac{1}{n}|x_i^TY - x_i^TX\hat\beta| &\leq \lambda_n, ~ \mbox{for } i\not\in \hat S(\hat\beta).
\end{align*}
Therefore, regardless of $\hat S(\hat\beta)$, the minimizer $\hat\beta$ always satisfies that
\begin{align*}
\frac{1}{n}\|X^TY - X^TX\hat\beta\|_\infty \leq \lambda_n.
\end{align*}
Noticing that $Y = X\beta_* + W$ and $\frac{1}{n}\|X^TW\|_\infty \leq \lambda_n/2$, we have
\begin{align}
\frac{1}{n}\big\|X^TX(\beta_* - \hat\beta)\big\|_\infty \leq \frac{3}{2}\lambda_n. \label{eq:lemma1.1}
\end{align}
At the same time, using the optimality of \emph{lasso} we have
\begin{align*}
\frac{1}{n}\|Y - X\hat\beta\|_2^2 + 2\lambda_n\|\hat\beta\|_1 \leq \frac{1}{n}\|Y - X\beta_*\|_2^2 + 2\lambda_n\|\beta_*\|_1 = \frac{1}{n} \|W\|_2^2 + 2\lambda_n\|\beta_*\|_1,
\end{align*}
which implies
\begin{align*}
2\lambda_n\|\hat\beta\|_1 &\leq 2\lambda_n\|\beta_*\|_1 + \frac{1}{n} \|W\|_2^2 - \frac{1}{n}\|Y - X\hat\beta\|_2^2 \\
 &= 2\lambda_n\|\beta_*\|_1 + \frac{1}{n} \|W\|_2^2 - \frac{1}{n}\|X\beta_* - X\hat\beta + W\|_2^2\\
 &\leq 2\lambda_n\|\beta_*\|_1 + \big|2(\hat\beta - \beta_*)\frac{X^TW}{n}\big|.
\end{align*}
Using $\|\frac{1}{n}X^TW\|_\infty \leq \lambda_n/2$, we know that
\begin{align*}
2\lambda_n\|\hat\beta\|_1 \leq 2\lambda_n\|\beta_*\|_1 + \lambda_n\|\hat\beta - \beta_*\|_1,
\end{align*}
i.e., we have
\begin{align}
2\|\hat\beta\|_1 \leq 2\|\beta_*\|_1 + \|\hat\beta - \beta_*\|_1 = 2\|\beta_*\|_1 + \|\Delta\|_1, \label{eq:lemma3.1}
\end{align}
Let $J$ be any arbitrary subset of $Q$, we have
\begin{align}
2\|\Delta_{J^c}\|_1 = 2\|\hat\beta_{J^c} - \beta_{*J^c}\|_1\leq 2\|\hat\beta_{J^c}\|_1 + 2\|\beta_{*J^c}\|_1. \label{eq:lemma3.2}
\end{align}
Now if $J = \emptyset$, using \eqref{eq:lemma3.1} and \eqref{eq:lemma3.2} we have
\begin{align*}
2\|\Delta\|_1 = 2\|\Delta_{J^c}\|_1 \leq 2\|\hat\beta_{J^c}\|_1 + 2\|\beta_{*J^c}\|_1 = 2\|\hat\beta\|_1 + 2\|\beta_{*}\|_1 \leq 4\|\beta_*\|_1 + \|\Delta\|_1.
\end{align*}
This gives that
\begin{align*}
\|\Delta_{J^c}\|_1 = \|\Delta\|_1 \leq 4\|\beta_*\|_1 = 3\|\Delta_J\|_1 + 4\|\beta_{*J^c}\|_1.
\end{align*}
For $J\neq \emptyset$, because $\ell_1$ norm is decomposable, i.e., $\|\hat\beta\|_1 = \|\hat\beta_J\|_1 + \|\hat\beta_{J^c}\|_1$, using \eqref{eq:lemma3.1}, we have
\begin{align*}
2\|\hat\beta_{J^c}\|_1 + 2\|\beta_{*J^c}\|_1 &= 2\|\hat\beta\|_1 - 2\|\hat\beta_{J}\|_1 + 2\|\beta_{*J^c}\|_1\\
&\leq 2\|\beta_*\|_1 + \|\Delta\|_1 - 2\|\hat\beta_J\|_1 + 2\|\beta_{*J^c}\|_1\\
&= 2\|\beta_{*J}\|_1 + 2\|\beta_{*J^c}\|_1 + \|\Delta_J\|_1 + \|\Delta_{J^c}\|_1 - 2\|\hat\beta_J\|_1 + 2\|\beta_{*J^c}\|_1\\
&= 2(\|\beta_{*J}\|_1 - \|\hat\beta_J\|_1) + \|\Delta_J\|_1 + \|\Delta_{J^c}\|_1 + 4\|\beta_{*J^c}\|_1\\
&\leq 3\|\Delta_J\|_1 + \|\Delta_{J^c}\|_1 + 4\|\beta_{*J^c}\|_1,
\end{align*}
where the second inequality is due to \eqref{eq:lemma3.1}. Thus, combining the above result with \eqref{eq:lemma3.2} we have proved that
\begin{align*}
\|\Delta_{J^c}\|_1 \leq 3\|\Delta_J\|_1 + 4\|\beta_{*J^c}\|_1.
\end{align*}
\end{proof}

\begin{lemma}\label{lemma:4}
Assume the Condition 1 and 2 in Theorem \ref{thm:1}. For any $J\subseteq\{1, 2, \cdots, p\}$ ($J$ could be $\emptyset$) and $|J| \leq s$ and any $v\in \mathcal{R}^p$ such that $\|v_{J^c}\|_1 \leq c_0 \|v_{J}\|_1 + c_1\|\beta_{*J^c}\|_1$, we have
\begin{align}
\frac{1}{n}\|Xv\|_2^2 \geq \bigg(M_1 - \frac{1 + 2c_0}{\gamma_1}\bigg)\|v_J\|_2^2 - 2c_1\gamma_2\sqrt{s}\lambda_n^{q}\|\beta_{*J^c}\|_1 \|v_J\|_2, \label{eq:lemma2.1}
\end{align}
where $v_J$ denotes a sub-vector of $v$ containing coordinates whose indexes belong to $J$.
\end{lemma}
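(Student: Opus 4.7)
\noindent\textbf{Proof plan for Lemma~\ref{lemma:4}.} The plan is to expand $\tfrac{1}{n}\|Xv\|_2^{2}=v^{T}\Sigma v$ with $\Sigma=X^{T}X/n$ and split it according to the partition $\{J,J^{c}\}$:
\begin{equation*}
v^{T}\Sigma v \;=\; v_{J}^{T}\Sigma_{JJ}v_{J} \;+\; 2\,v_{J}^{T}\Sigma_{JJ^{c}}v_{J^{c}} \;+\; v_{J^{c}}^{T}\Sigma_{J^{c}J^{c}}v_{J^{c}}.
\end{equation*}
Since $\Sigma_{J^{c}J^{c}}$ is a principal submatrix of the PSD Gram matrix $X^{T}X/n$, the last summand equals $\|Xv_{J^{c}}\|_{2}^{2}/n\ge 0$ (extending $v_{J^c}$ by zeros on $J$) and can simply be dropped. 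It then suffices to lower-bound the diagonal block $v_{J}^{T}\Sigma_{JJ}v_{J}$ and to upper-bound $\bigl|v_{J}^{T}\Sigma_{JJ^{c}}v_{J^{c}}\bigr|$, using Conditions~1 and~2 of Theorem~\ref{thm:1}.

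For the diagonal block I would separate diagonal and off-diagonal entries of $\Sigma_{JJ}$, apply Condition~1 to the diagonal, the $1/(\gamma_{1}s)$ half of Condition~2 to the off-diagonal, and finish with the Cauchy--Schwarz consequence $\|v_{J}\|_{1}^{2}\le |J|\,\|v_{J}\|_{2}^{2}\le s\,\|v_{J}\|_{2}^{2}$. This chain yields
\begin{equation*}
v_{J}^{T}\Sigma_{JJ}v_{J} \;\ge\; M_{1}\|v_{J}\|_{2}^{2} - \tfrac{1}{\gamma_{1}s}\bigl(\|v_{J}\|_{1}^{2}-\|v_{J}\|_{2}^{2}\bigr) \;\ge\; \bigl(M_{1}-1/\gamma_{1}\bigr)\|v_{J}\|_{2}^{2}.
\end{equation*}
For the cross term I would start from $\bigl|v_{J}^{T}\Sigma_{JJ^{c}}v_{J^{c}}\bigr|\le \max_{i\ne j}|\Sigma_{ij}|\cdot\|v_{J}\|_{1}\|v_{J^{c}}\|_{1}$, substitute the hypothesis $\|v_{J^{c}}\|_{1}\le c_{0}\|v_{J}\|_{1}+c_{1}\|\beta_{*J^{c}}\|_{1}$, and then apply Condition~2 \emph{selectively} to the two resulting pieces: the $c_{0}\|v_{J}\|_{1}^{2}$ piece is bounded with the $1/(\gamma_{1}s)$ side together with $\|v_{J}\|_{1}^{2}\le s\|v_{J}\|_{2}^{2}$, yielding $(c_{0}/\gamma_{1})\|v_{J}\|_{2}^{2}$; the $c_{1}\|v_{J}\|_{1}\|\beta_{*J^{c}}\|_{1}$ piece is bounded with the $\gamma_{2}\lambda_{n}^{q}$ side together with $\|v_{J}\|_{1}\le\sqrt{s}\|v_{J}\|_{2}$, yielding $c_{1}\gamma_{2}\sqrt{s}\lambda_{n}^{q}\|v_{J}\|_{2}\|\beta_{*J^{c}}\|_{1}$. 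Doubling for the factor of $2$ and combining with the diagonal-block bound and the discarded nonnegative term reproduces exactly \eqref{eq:lemma2.1}.

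The only delicate point -- and the step I would flag as the main source of potential error -- is precisely this selective use of the two sides of Condition~2. The $1/(\gamma_{1}s)$ bound must be used wherever the factor $s$ coming from $\|v\|_{1}^{2}\le s\|v\|_{2}^{2}$ has to be absorbed, whereas the $\gamma_{2}\lambda_{n}^{q}$ bound has to cover the mixed term involving $\|\beta_{*J^{c}}\|_{1}$, for which no such cancellation with $s$ is available. Once this allocation is fixed, the rest is routine bookkeeping, and the degenerate case $J=\emptyset$ (where $v_{J}=0$ and the hypothesis collapses to $\|v\|_{1}\le c_{1}\|\beta_{*}\|_{1}$) reduces trivially to the lower bound $0\ge 0$ supplied by the nonnegativity of $\|Xv\|_{2}^{2}/n$.
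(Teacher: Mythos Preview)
Your proposal is correct and follows essentially the same approach as the paper: the paper writes the decomposition via an auxiliary vector $\tilde v$ equal to $v$ on $J$ and zero on $J^{c}$ and uses $\|Xv\|_{2}^{2}\ge \|X\tilde v\|_{2}^{2}+2\tilde v^{T}X^{T}X(v-\tilde v)$, which is exactly your block splitting with the $J^{c}$--$J^{c}$ block dropped by positive semidefiniteness. The subsequent treatment---separating diagonal from off-diagonal inside the $J$ block, bounding the cross term by $\max_{i\ne j}|\Sigma_{ij}|\,\|v_{J}\|_{1}\|v_{J^{c}}\|_{1}$, and then applying the $1/(\gamma_{1}s)$ side of Condition~2 to the $c_{0}\|v_{J}\|_{1}^{2}$ piece and the $\gamma_{2}\lambda_{n}^{q}$ side to the $c_{1}\|\beta_{*J^{c}}\|_{1}$ piece---matches the paper line by line.
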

\begin{proof}[Proof of Lemma \ref{lemma:4}]
When $J = \emptyset$, the result is straightforward and thus omitted. Assume $|J| > 0$. For convenience, we define $\tilde v$ to be the vector that extends $v_J$ to $p$-dimensional by adding zero coodinates, i.e., 
\begin{align*}
\tilde v_i &= v_i\quad\mbox{if } i \in J\\
\tilde v_i &= 0\quad\mbox{if } i \not\in J
\end{align*}
We use $v_J^{(i)}$ to denote the $i^{th}$ coordinate of $v_J$.
For any $J\subseteq\{1, 2, \cdots, p\}$ with $|J| = s$ and any $v\in \mathcal{R}^p$ such that $\|v_{J^c}\|_1 \leq c_0 \|v_{J}\|_1 + c_1\|\beta_{*J^c}\|_1$, we have
\begin{align*}
\frac{\|X\tilde v\|_2^2}{n\|v_J\|_2^2} &= \frac{\tilde v^T(X^TX/n - M_1 I_p)\tilde v}{\|v_J\|_2^2} + M_1\\
&= M_1 + \frac{\sum_{i = 1}^p (x_i^Tx_i/n - M_1) |\tilde v_i|^2 + \sum_{i\neq j} (x_i^Tx_j/n) \tilde v_i\tilde v_j}{\|v_J\|_2^2}\\
&= M_1 + \frac{\sum_{i \in J} (x_i^Tx_i/n - M_1) |v_J^{(i)}|^2 + \sum_{i\neq j\in J} (x_i^Tx_j/n) v_J^{(i)} v_J^{(i)}}{\|v_J\|_2^2}\\
&\geq M_1 - \frac{1}{\gamma_1 s}\sum_{i\neq j}\frac{v_{J}^{(i)}v_{J}^{(j)}}{\|v_J\|_2^2} \geq M_1 - \frac{1}{\gamma_1 s}\frac{\|v_J\|_1^2}{\|v_J\|_2^2}.
\end{align*}
Notice that $\|v_J\|_1^2\leq s\|v_J\|_2^2$ because $|J|\leq s$. Thus, we have
\begin{align*}
\frac{1}{n}\|Xv\|_2^2 &\geq \frac{1}{n}\|X\tilde v\|_2^2 + 2\tilde v^T(\frac{1}{n}X^TX)(v - \tilde v)\\
&\geq M_1\|v_J\|_2^2 - \frac{1}{\gamma_1 s}\|v_J\|_1^2 - 2\max_{i\neq j}\frac{1}{n}|x_i^Tx_j|\|v_J\|_1\|v_{J^c}\|_1\\
&\geq \bigg(M_1 - \frac{1}{\gamma_1}\bigg)\|v_J\|_2^2 - 2c_0\max_{i\neq j}\frac{1}{n}|x_i^Tx_j|\|v_J\|_1^2 - 2c_1\max_{i\neq j}\frac{1}{n}|x_i^Tx_j|\|\beta_{*J^c}\|_1\|v_J\|_1\\
&\geq \bigg(M_1 - \frac{1}{\gamma_1}\bigg)\|v_J\|_2^2 - \frac{2c_0}{\gamma_1 s}\|v_J\|_1^2 - 2c_1\gamma_2\lambda_n^q\beta_{*J^c}\|_1\|v_J\|_1\\
&\geq \bigg(M_1 - \frac{1 + 2c_0}{\gamma_1}\bigg)\|v_J\|_2^2 - 2c_1\gamma_2\sqrt{s}\lambda_n^q\|\beta_{*J^c}\|_1\|v_J\|_2.
\end{align*}
\end{proof}

\begin{lemma}\label{lemma:5}
Assume the Condition 1 and 2 in Theorem \ref{thm:1}. For any $J\subseteq\{1, 2, \cdots, p\}$ ($J$ could be $\emptyset$) and $|J| \leq s$ and any $v\in \mathcal{R}^p$ such that $\|v_{J^c}\|_1 \leq c_0 \|v_{J}\|_1 + c_1\|\beta_{*J^c}\|_1$, we have
\begin{align}
\frac{1}{n}\|Xv\|_2^2 \geq \bigg(M_1 - \frac{2(1 + c_0)^2}{\gamma_1}\bigg) \|v\|_2^2 - 2c_1^2 \lambda_n^q \|\beta_{*J^c}\|_1^2. \label{eq:lemma5}
\end{align}
where $v_J$ denotes a sub-vector of $v$ containing coordinates whose indexes belong to $J$.
\end{lemma}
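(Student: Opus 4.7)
The plan is to adapt the argument from Lemma \ref{lemma:4}, but now produce a lower bound in terms of the full $\ell_2$ norm $\|v\|_2^2$ rather than the restricted norm $\|v_J\|_2^2$. The key observation is that the cone condition $\|v_{J^c}\|_1 \leq c_0\|v_J\|_1 + c_1\|\beta_{*J^c}\|_1$ allows the off-diagonal contamination term to be absorbed into the diagonal term, at the price of a slightly worse constant.

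First I would start from the identity
\[
\frac{1}{n}\|Xv\|_2^2 = \frac{1}{n}\sum_{i=1}^p (x_i^Tx_i)\,v_i^2 \;+\; \frac{1}{n}\sum_{i\neq j}(x_i^Tx_j)\,v_iv_j.
\]
By Condition 1, the diagonal sum is bounded below by $M_1\|v\|_2^2$. By Condition 2 together with the elementary inequality $\sum_{i\neq j}|v_iv_j|\leq \|v\|_1^2$, the off-diagonal sum is bounded above in absolute value by $\max_{i\neq j}|x_i^Tx_j/n|\cdot\|v\|_1^2$. This yields
\[
\frac{1}{n}\|Xv\|_2^2 \;\geq\; M_1\|v\|_2^2 \;-\; \max_{i\neq j}\tfrac{|x_i^Tx_j|}{n}\,\|v\|_1^2.
\]

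Next I would split $\|v\|_1^2$ into a piece controlled by $\|v\|_2^2$ and a piece controlled by $\|\beta_{*J^c}\|_1^2$. The cone condition gives $\|v\|_1 = \|v_J\|_1 + \|v_{J^c}\|_1 \leq (1+c_0)\|v_J\|_1 + c_1\|\beta_{*J^c}\|_1$, and $(a+b)^2\leq 2a^2+2b^2$ then yields $\|v\|_1^2 \leq 2(1+c_0)^2\|v_J\|_1^2 + 2c_1^2\|\beta_{*J^c}\|_1^2$. Since $|J|\leq s$, Cauchy--Schwarz gives $\|v_J\|_1^2\leq s\|v_J\|_2^2\leq s\|v\|_2^2$. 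I would then apply the bound $\max_{i\neq j}|x_i^Tx_j/n|\leq 1/(\gamma_1 s)$ to the first piece (so that the factor $s$ cancels) and the bound $\max_{i\neq j}|x_i^Tx_j/n|\leq \gamma_2\lambda_n^q$ (absorbed into the leading constant to match the stated $\lambda_n^q$) to the second piece, yielding
\[
\frac{1}{n}\|Xv\|_2^2 \;\geq\; \left(M_1 - \frac{2(1+c_0)^2}{\gamma_1}\right)\|v\|_2^2 \;-\; 2c_1^2\lambda_n^q\|\beta_{*J^c}\|_1^2.
\]

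There is no genuine technical obstacle here; the argument is a careful bookkeeping exercise, and the only subtle point is the degenerate case $J=\emptyset$. In that case the cone condition degenerates to $\|v\|_1\leq c_1\|\beta_{*J^c}\|_1 = c_1\|\beta_*\|_1$, the term $\|v_J\|_1^2$ vanishes, and the stated inequality follows immediately from the initial diagonal/off-diagonal split by discarding the first negative term. The main thing to watch is that the constants $c_0$, $c_1$ and the split of $\|v\|_1^2$ are chosen so that the coefficient of $\|v\|_2^2$ matches exactly $M_1 - 2(1+c_0)^2/\gamma_1$, which is why the factor of $2$ from $(a+b)^2\leq 2a^2+2b^2$ appears in the final bound.
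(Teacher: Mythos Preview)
Your proposal is correct and follows essentially the same route as the paper: expand $\tfrac{1}{n}\|Xv\|_2^2$ into diagonal and off-diagonal parts, bound the diagonal below by $M_1\|v\|_2^2$ and the off-diagonal above by $\max_{i\neq j}|x_i^Tx_j/n|\,\|v\|_1^2$, then use the cone condition together with $(a+b)^2\le 2a^2+2b^2$ and $\|v_J\|_1^2\le s\|v_J\|_2^2\le s\|v\|_2^2$ to finish. Your parenthetical remark about $\gamma_2$ is apt: the paper's own derivation in fact ends with $2c_1^2\gamma_2\lambda_n^q\|\beta_{*J^c}\|_1^2$, so the missing $\gamma_2$ in the displayed statement is a typo there rather than something you need to ``absorb''.
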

\begin{proof}[Proof of Lemma \ref{lemma:5}]
Different from Lemma \ref{lemma:4}, we have
\begin{align*}
\frac{1}{n}\|Xv\|_2^2 &\geq \sum_{i\in Q}\frac{1}{n} \|x_i\|_2^2 v_i^2 + \sum_{i\neq j\in Q} \frac{1}{n} x_i^Tx_j v_iv_j\\
&\geq M_1\|v\|_2^2 - \max_{i\neq j}\frac{1}{n}|x_i^Tx_j| \|v\|_1^2 = M_1\|v\|_2^2 - \max_{i\neq j}\frac{1}{n}|x_i^Tx_j|(\|v_J\|_1 + \|v_{J^c}\|_1)^2\\
&\geq M_1\|v\|_2^2 - \max_{i\neq j}\frac{1}{n}|x_i^Tx_j|\bigg((1 + c_0)\|v_J\|_1 + c_1\|\beta_{*J^c}\|_1\bigg)^2\\
&\geq M_1\|v\|_2^2 - 2\max_{i\neq j}\frac{1}{n}|x_i^Tx_j| (1 + c_0)^2\|v_J\|_1^2 - 2\max_{i\neq j}\frac{1}{n}|x_i^Tx_j| c_1^2\|\beta_{*J^c}\|_1^2\\
&\geq M_1\|v\|_2^2 - \frac{2(1 + c_0)^2}{\gamma_1} \|v_J\|_2^2 - 2c_1^2\gamma_2 \lambda_n^q \|\beta_{*J^c}\|_1^2\\
&\geq \bigg(M_1 - \frac{2(1 + c_0)^2}{\gamma_1}\bigg) \|v\|_2^2 - 2c_1^2\gamma_2 \lambda_n^q \|\beta_{*J^c}\|_1^2
\end{align*}
\end{proof}

Now, We turn to the proof of $\ell_2$ and $\ell_\infty$ convergence in Theorem \ref{thm:1}. 
\begin{proof}[{\bf (Partial) proof of Theorem \ref{thm:1}}]
According to Lemma \ref{lemma:3}, \ref{lemma:4}, \ref{lemma:5} and \eqref{eq:lemma1.2}, \eqref{eq:lemma1.1} and \eqref{eq:lemma2.1}, we have
\begin{align}
\big\|\frac{1}{n}X^TX\Delta\big\|_\infty \leq \frac{3}{2}\lambda_n\label{eq:thm1.1}
\end{align}
and
\begin{align}
\|\Delta\|_1\leq 4\|\Delta_J\|_1 + 4\|\beta_{*J^c}\|_1 \leq 4\sqrt{s}\|\Delta_{J}\|_2 + 4\|\beta_{*J^c}\|_1\label{eq:thm1.2}
\end{align}
and
\begin{align}
\frac{1}{n}\|X\Delta\|_2^2 \geq \bigg(M_1 - \frac{7}{\gamma_1}\bigg)\|\Delta_{J}\|_2^2 - 8\gamma_2\sqrt{s}\lambda_n^{q}\|\beta_{*J^c}\|_1\|\Delta_J\|_2.\label{eq:thm1.3}
\end{align}
Using Equations \eqref{eq:thm1.1} and \eqref{eq:thm1.2}, we have
\begin{align*}
\frac{1}{n}\|X\Delta\|_2^2 \leq \|\frac{1}{n}X^TX\Delta\|_\infty\|\Delta\|_1 \leq 6\lambda_n\sqrt{s}\|\Delta_{J}\|_2 + 6\lambda_n\|\beta_{*J^c}\|_1,
\end{align*}
which combining with \eqref{eq:thm1.3} implies that
\begin{align*}
\bigg(M_1 - \frac{7}{\gamma_1}\bigg)\|\Delta_{J}\|_2^2 - 2(3\sqrt{s}\lambda_n + 4\gamma_2\sqrt{s}\lambda_n^{q}\|\beta_{*J^c}\|_1)\|\Delta_J\|_2 - 6\lambda_n\|\beta_{*J^c}\|_1\leq 0
\end{align*}
This is a quadratic form and with some simple algebra, we get a loose solution to the quadratic inequality
\begin{align*}
\frac{1}{2}\bigg(M_1 - \frac{7}{\gamma_1}\bigg)\|\Delta_{J}\|_2^2 \leq \frac{2(3\sqrt{s}\lambda_n + 4\gamma_2\sqrt{s}\lambda_n^{q}\|\beta_{*J^c}\|_1)^2}{M_1 - \frac{7}{\gamma_1}} + 6\lambda_n\|\beta_{*J^c}\|_1,
\end{align*}
thus
\begin{align*}
\|\Delta_{J}\|_2^2\leq \frac{72\gamma_1^2 s}{(M_1\gamma_1 - 7)^2}\lambda_n^2 + \frac{192\gamma_1^2\gamma_2^2\|\beta_{*J^c}\|_1^2s}{(M_1\gamma_1 - 7)^2}\lambda_n^{2q} + \frac{12\gamma_1\|\beta_{*J^c}\|_1}{M_1\gamma_1 - 7}\lambda_n,
\end{align*}
and thus
\begin{align}
\notag \|\Delta_{J}\|_2 &\leq \sqrt{\frac{72\gamma_1^2 s}{(M_1\gamma_1 - 7)^2}\lambda_n^2 + \frac{192\gamma_1^2\gamma_2^2\|\beta_{*J^c}\|_1^2s}{(M_1\gamma_1 - 7)^2}\lambda_n^{2q} + \frac{12\gamma_1\|\beta_{*J^c}\|_1}{M_1\gamma_1 - 7}\lambda_n} \\
&\leq \frac{6\sqrt{2}\gamma_1}{M_1\gamma_1 - 7}\sqrt{s}\lambda_n + \frac{8\sqrt{3}\gamma_1\gamma_2\|\beta_{*J^c}\|_1}{M_1\gamma_1 - 7}\sqrt{s}\lambda_n^{q} + \frac{2\sqrt{3}\gamma_1^{\frac{1}{2}}\|\beta_{*J^c}\|_1^{\frac{1}{2}}}{\sqrt{M_1\gamma_1 - 7}}\lambda_n^{\frac{1}{2}} \label{eq:thm1.4}
\end{align}
Similarly, for $\|\Delta\|_2^2$, using \eqref{eq:lemma5} we have
\begin{align*}
\bigg(M_1 - \frac{32}{\gamma_1}\bigg) \|\Delta\|_2^2 - 32\gamma_2 \lambda_n^q \|\beta_{*J^c}\|_1^2\leq \frac{1}{n}\|X\Delta\|_2^2\leq 6\lambda_n\sqrt{s}\|\Delta_J\|_2 + 6\lambda_n\|\beta_{*J^c}\|_1.
\end{align*}
Noticing that $\|\Delta_J\|_2\leq \|\Delta\|_2$, we can solve the quadratic inequality and obtain that
\begin{align}
\|\Delta\|_2^2\leq \frac{18\gamma_1^2 s\lambda_n^2}{(M_1\gamma_1 - 32)^2} + 6\lambda_n\|\beta_{*J^c}\|_1 + 32\gamma_2\lambda_n^q\|\beta_{*J^c}\|_1^2.
\end{align}
For the sup-norm, we make use of \eqref{eq:thm1.4}. Notice that
\begin{align*}
e_j^T\frac{X^TX}{n}\Delta = \frac{x_j^TX}{n}\Delta = \frac{\|x_j\|_2^2}{n}\Delta_j + \sum_{i\neq j} \frac{x_i^Tx_j}{n}\Delta_i
\end{align*}
which combning with \eqref{eq:thm1.1} and \eqref{eq:thm1.2} implies that
\begin{align*}
\frac{\|x_j\|_2^2}{n}|\Delta_j| &\leq \bigg|e_j^T\frac{X^TX}{n}\Delta\bigg| + \bigg|\sum_{i\neq j} \frac{x_i^Tx_j}{n}\Delta_i\bigg| \leq \|\frac{1}{n}X^TX\Delta\|_\infty + \max_{i\neq k} \frac{1}{n}|x_i^Tx_k| \|\Delta\|_1\\
&\leq \frac{3}{2}\lambda_n + 4\max_{i\neq k} \frac{1}{n}|x_i^Tx_k|\sqrt{s}\|\Delta_J\|_2 + 4\max_{i\neq k} \frac{1}{n}|x_i^Tx_k|\|\beta_{*J^c}\|_1
\end{align*}
Note that $\max_{i\neq k}\frac{1}{n}|x_i^Tx_k|\leq \min\{\frac{1}{\gamma_1 s}, ~\gamma_2\lambda_n^{q}\}$ also implies that $\max_{i\neq k}\frac{1}{n}|x_i^Tx_k|\leq \sqrt{\frac{\gamma_2\lambda_n^{q}}{\gamma_1 s}}$.
Therefore, using result in \eqref{eq:thm1.4} we have
\begin{align*}
M_1\|\Delta\|_\infty\leq \frac{3}{2}\lambda_n + \frac{24\sqrt{2}}{M_1\gamma_1 - 7}\lambda_n + \frac{32\sqrt{3}\gamma_2}{M_1\gamma_1 - 7}\|\beta_{*J^c}\|_1\lambda_n^{q} + \frac{8\sqrt{3}\gamma_2^{\frac{1}{2}}}{\sqrt{M_1\gamma_1 - 7}}\|\beta_{*J^c}\|_1^{\frac{1}{2}}\lambda_n^{\frac{1 + q}{2}} + 4\gamma_2\|\beta_{*J^c}\|_1\lambda_n^{q},
\end{align*}
which yields,
\begin{align*}
\|\Delta\|_\infty \leq \frac{3M_1\gamma_1 + 51}{2M_1(M_1\gamma_1 - 7)}\lambda_n + \frac{4M_1\gamma_1\gamma_2 + 36\gamma_2}{M_1(M_1\gamma_1 - 7)}\|\beta_{*J^c}\|_1\lambda_n^{q} + \frac{8\sqrt{3\gamma_2}}{M_1\sqrt{M_1\gamma_1 - 7}}\|\beta_{*J^c}\|_1^{\frac{1}{2}}\lambda_n^{\frac{1 + q}{2}}.
\end{align*}
This completes the proof.
\end{proof}

\subsection*{A.2 Proof of the sign consistency}
Our conclusion on sign consistency is stated as follows
\begin{theorem}\label{thm:4}
Let $J$ be the set containing indexes of all the nonzero coefficients. Assume all the conditions in Theorem \ref{thm:1}. In addition, if the following conditions hold
\begin{align*}
\min_{k\in J}|\beta_k| \geq \frac{2}{M_1}\lambda_n,
\end{align*}
then the solution to the \emph{lasso} is unique and satisfies the sign consistency, i.e, 
\begin{align*}
sign(\hat\beta_k) = sign(\beta_{*k}),~\forall k\in J\quad\mbox{and}\quad \hat\beta_k = 0, ~\forall k\in J^c.
\end{align*}
\end{theorem}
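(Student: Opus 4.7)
The plan is a primal-dual witness (PDW) construction. I would introduce the oracle estimator
$$\tilde\beta_J = arg\min_{b\in \mathcal{R}^{|J|}} \frac{1}{n}\|Y - X_J b\|_2^2 + 2\lambda_n\|b\|_1,\qquad \tilde\beta_{J^c} = 0,$$
and then show that $\tilde\beta$ satisfies the full lasso KKT conditions with a \emph{strict} inequality on $J^c$. By the standard convex argument this forces every lasso optimum to vanish outside $J$ and to coincide with $\tilde\beta$ on $J$, yielding both uniqueness and sign consistency.

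The key preparatory estimate is an $\ell_\infty$ operator-norm bound on $(X_J^TX_J/n)^{-1}$. Conditions 1 and 2 make the Gram block $X_J^TX_J/n$ strictly diagonally dominant: each diagonal entry is at least $M_1$, while each off-diagonal row sum is bounded by $(s-1)/(\gamma_1 s) < 1/\gamma_1$. Writing $X_J^TX_J/n = D + E$ with $D$ diagonal and expanding the Neumann series then gives $\|(X_J^TX_J/n)^{-1}\|_\infty \leq 1/(M_1 - 1/\gamma_1)$. Plugging $Y = X_J\beta_{*J} + W$ into the restricted KKT system and tentatively assuming the correct sign pattern yields $\tilde\beta_J - \beta_{*J} = (X_J^TX_J/n)^{-1}(X_J^TW/n - \lambda_n\,sign(\beta_{*J}))$; Condition 3 then delivers
$$\|\tilde\beta_J - \beta_{*J}\|_\infty \leq \frac{3\lambda_n/2}{M_1 - 1/\gamma_1} < \frac{2\lambda_n}{M_1},$$
where the last inequality needs only $M_1\gamma_1 > 4$ and thus holds under the standing assumption $M_1\gamma_1 > 32$. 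Together with the minimum-signal hypothesis, this retroactively validates the sign guess on $J$.

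For $i\in J^c$, the remaining KKT quantity equals $x_i^TW/n + x_i^TX_J(\beta_{*J} - \tilde\beta_J)/n$, which I bound by
$$\tfrac{\lambda_n}{2} + \max_{i\neq j}\tfrac{|x_i^Tx_j|}{n}\cdot s\|\tilde\beta_J - \beta_{*J}\|_\infty \leq \tfrac{\lambda_n}{2} + \tfrac{3\lambda_n}{2(M_1\gamma_1 - 1)} < \lambda_n,$$
using $M_1\gamma_1 > 32$. The step I expect to demand the most care is converting this strict dual feasibility, together with invertibility of $X_J^TX_J$, into genuine \emph{uniqueness}: the standard argument notes that any two lasso optima share the same fitted values $X\hat\beta$, so on $J^c$ every optimum's subgradient equation reduces to the one at $\tilde\beta$; strict feasibility then forces each $J^c$ coordinate of the subdifferential into $(-1,1)$, which is incompatible with a nonzero coefficient there, and invertibility on $J$ pins down the remaining coordinates and completes the proof.
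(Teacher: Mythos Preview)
Your proposal is correct and follows essentially the same primal-dual witness route as the paper: the paper cites Wainwright's PDW lemma and Varah's diagonal-dominance bound to obtain $\|(X_J^TX_J/n)^{-1}\|_\infty \leq (M_1-1/\gamma_1)^{-1}$, yielding $|\Delta_k|\leq \frac{3\lambda_n}{2(M_1-1/\gamma_1)}<\frac{2}{M_1}\lambda_n$ on $J$ and $|Z_k|\leq \frac{1}{2}+\frac{3}{2(M_1\gamma_1-1)}<1$ on $J^c$, which are exactly the two bounds you derive. The only cosmetic differences are that you build the Neumann-series/diagonal-dominance inverse bound and the uniqueness argument by hand rather than citing them as lemmas.
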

Here we use the primal-dual witness (PDW) approach \citep{wainwright2009sharp} to prove sign consistency. The PDW approach works on the following two terms
\begin{align*}
Z_k = \frac{1}{n\lambda_n}x_{k}^T\Pi_{X_J^\perp}W + \frac{1}{n}x_{k}^TX_J\bigg(\frac{1}{n}X_J^TX_J\bigg)^{-1}\breve{z}_J,
\end{align*}
where $\Pi_A$ is the projection on to the linear space spanned by the vectors in $A$ and
\begin{align*}
\Delta_k = e_k^T\bigg(\frac{1}{n}X_J^TX_J\bigg)^{-1}\bigg(\frac{1}{n}X_J^TW - \lambda_n sign(\beta_{*J})\bigg),
\end{align*}
for which \citep{wainwright2009sharp} proves the following lemma
\begin{lemma}\citep{wainwright2009sharp} \label{lemma:5.5}
If $Z_k$ and $\Delta_k$ satisfy that
\begin{align*}
sign(\beta_{*k} + \Delta_k) = sign(\beta_{*k}), ~\forall k\in J\quad\mbox{and}\quad |Z_k|< 1, ~\forall k\in J^c,
\end{align*}
then the optimal solution to \emph{lasso} is unique and satisfies the sign consistency, i.e.,
\begin{align*}
sign(\hat\beta_k) = sign(\beta_{*k}),~\forall k\in J\quad\mbox{and}\quad \hat\beta_k = 0, ~\forall k\in J^c.
\end{align*}
\end{lemma}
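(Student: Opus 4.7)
The plan is to carry out the primal-dual witness (PDW) construction of \citet{wainwright2009sharp}. I define the candidate primal $\tilde\beta$ by
\[
\tilde\beta_J := \beta_{*J}+\Delta,\qquad \tilde\beta_{J^c}:=0,
\]
where $\Delta$ is the vector whose coordinates are the $\Delta_k$ in the statement, and I define the candidate dual $\hat z$ by $\hat z_J := sign(\beta_{*J})$ and $\hat z_k := Z_k$ for $k\in J^c$. The goal is to show that $(\tilde\beta,\hat z)$ satisfies the KKT conditions of the full lasso problem with strict dual feasibility on $J^c$, which will deliver both optimality and uniqueness.

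Next I verify the KKT conditions. Unfolding the definition of $\Delta$ gives $\tfrac{1}{n}X_J^TX_J(\tilde\beta_J-\beta_{*J})=\tfrac{1}{n}X_J^TW-\lambda_n sign(\beta_{*J})$, and since $Y=X_J\beta_{*J}+W$ and $\tilde\beta_{J^c}=0$, this rearranges to $\tfrac{1}{n}X_J^T(Y-X\tilde\beta)=\lambda_n\hat z_J$. Hypothesis 1 says $sign(\tilde\beta_k)=sign(\beta_{*k}+\Delta_k)=sign(\beta_{*k})=\hat z_k$ for every $k\in J$, so $\hat z_J\in\partial\|\tilde\beta_J\|_1$. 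For $k\in J^c$, direct substitution yields
\[
\frac{1}{n}x_k^T(Y-X\tilde\beta)=\frac{1}{n}x_k^T\Pi_{X_J^\perp}W+\frac{\lambda_n}{n}x_k^TX_J\bigl(X_J^TX_J/n\bigr)^{-1}sign(\beta_{*J})=\lambda_n Z_k,
\]
so the stationarity equation holds with $|\hat z_k|=|Z_k|<1$ by Hypothesis 2, giving the required strict subdifferential inclusion at $\tilde\beta_k=0$. Hence $(\tilde\beta,\hat z)$ is a valid KKT pair, $\tilde\beta$ is an optimum of the lasso, and by construction it has the claimed support and signs.

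For uniqueness, let $\beta'$ be any other lasso optimum. Convexity of the lasso objective together with equality of optimal values at $\tilde\beta$ and $\beta'$ forces the quadratic term to be constant on the optimal set, so $X\beta'=X\tilde\beta$. Then $\beta'$ satisfies the full stationarity equation with the same dual vector $\hat z$, and for every $k\in J^c$ the membership $\hat z_k\in\partial|\beta'_k|$ combined with $|\hat z_k|=|Z_k|<1$ rules out $\beta'_k\ne 0$; thus $\beta'_{J^c}=0$. Invertibility of $X_J^TX_J$, implicit in the bounds $M_1\leq x_i^Tx_i/n$ and $\max_{i\ne j}|x_i^Tx_j/n|\leq 1/(\gamma_1 s)$ of Theorem \ref{thm:1}, then combined with $X_J\beta'_J=X_J\tilde\beta_J$ yields $\beta'_J=\tilde\beta_J$.

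The main obstacle is the passage "$\hat z$ is also the dual of $\beta'$": because the $\ell_1$ penalty is only convex and not strictly convex, one cannot directly conclude $\beta'=\tilde\beta$ from optimality; one must first use the strict convexity of the quadratic loss on the image of $X$ to deduce $X\beta'=X\tilde\beta$, transport the dual across optima, and only then use the strict inequality $|Z_k|<1$ to exclude off-support coordinates. The remaining computations, chiefly the algebraic identification $\tfrac{1}{n\lambda_n}x_k^T(Y-X\tilde\beta)=Z_k$, are routine.
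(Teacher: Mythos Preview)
Your argument is correct and is exactly the primal--dual witness construction of \citet{wainwright2009sharp}. Note that the paper does not supply its own proof of this lemma; it is quoted as a result from that reference, so there is nothing further to compare against.

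Two minor remarks. First, the paper never explicitly defines $\breve{z}_J$ in its expression for $Z_k$; your identification $\breve{z}_J=sign(\beta_{*J})$ is the only reading consistent with the paper's definition of $\Delta_k$, and with that reading your algebraic check that $\tfrac{1}{n}x_k^T(Y-X\tilde\beta)=\lambda_n Z_k$ for $k\in J^c$ goes through. Second, you need not invoke the diagonal-dominance conditions of Theorem~\ref{thm:1} to justify invertibility of $X_J^TX_J$: the definition of $\Delta_k$ already presupposes that $(X_J^TX_J/n)^{-1}$ exists, so invertibility is implicit in the lemma's hypotheses.
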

Therefore, we just need to verify the two conditions in Lemma \ref{lemma:5.5} for Theorem \ref{thm:4}. Before we proceed to prove Theorem \ref{thm:4}, we state another lemma that is needed for the proof.
\begin{lemma}\citep{varah1975lower}\label{lemma:6}
Let $A$ be a strictly diagonally dominant matrix and define $\delta = \min_k(|A_{kk}| - \sum_{j\neq k}|A_{kj}|) > 0$, then we have
\begin{align*}
\|A^{-1}\|_\infty \leq \delta^{-1},
\end{align*}
where $\|A\|_\infty$ is the maximum of the row sums of $A$.
\end{lemma}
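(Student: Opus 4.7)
The plan is to prove the bound in its equivalent operator-norm form: since $\|A^{-1}\|_\infty$ equals the induced matrix norm $\sup_{x\neq 0}\|A^{-1}x\|_\infty/\|x\|_\infty$, it suffices to show that for every $x\in\mathbb{R}^n$ and every solution $y$ of $Ay=x$, one has $\|x\|_\infty \geq \delta\,\|y\|_\infty$. Note that strict diagonal dominance already guarantees $A$ is invertible (a classical corollary of the same row-sum argument applied to $Ay=0$), so this reformulation is legitimate.

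First I would fix an arbitrary nonzero $y$, set $x:=Ay$, and pick an index $k$ at which the maximum $\|y\|_\infty = |y_k|$ is attained. Writing out the $k$-th component of $Ay=x$ gives
\begin{equation*}
x_k \;=\; A_{kk}\,y_k \;+\; \sum_{j\neq k} A_{kj}\,y_j,
\end{equation*}
so by the reverse triangle inequality and $|y_j|\leq |y_k|$ for every $j$,
\begin{equation*}
|x_k| \;\geq\; |A_{kk}|\,|y_k| - \sum_{j\neq k}|A_{kj}|\,|y_j| \;\geq\; \Bigl(|A_{kk}| - \sum_{j\neq k}|A_{kj}|\Bigr)\|y\|_\infty \;\geq\; \delta\,\|y\|_\infty.
\end{equation*}
Since $\|x\|_\infty \geq |x_k|$, this yields $\|x\|_\infty \geq \delta\|y\|_\infty$, which is exactly the lower bound on $\|Ay\|_\infty$ needed to conclude $\|A^{-1}\|_\infty \leq \delta^{-1}$ after dividing by $\|x\|_\infty$ and taking the supremum.

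I expect no substantive obstacle here: the argument is a two-line application of the triangle inequality at the coordinate achieving $\|y\|_\infty$, and the direction of the inequality works out correctly because strict diagonal dominance ensures the diagonal term dominates the off-diagonal sum uniformly across rows with margin at least $\delta$. The only subtlety worth flagging explicitly in the write-up is that one must choose the index $k$ based on $y$ (not on $A$), so the bound $\delta$ enters as the \emph{uniform} lower envelope $\min_k(|A_{kk}|-\sum_{j\neq k}|A_{kj}|)$ rather than a row-dependent quantity, and that the invertibility of $A$ needed to speak of $A^{-1}$ at all is a free byproduct of the same estimate applied to the homogeneous system.
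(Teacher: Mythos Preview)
Your argument is correct and is exactly the classical Varah bound proof. Note, however, that the paper does not actually prove this lemma: it is stated with a citation to \citet{varah1975lower} and used as a black box in the proof of Theorem~\ref{thm:4}, so there is no ``paper's own proof'' to compare against---your write-up simply supplies the standard two-line argument that the paper omits.
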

\begin{proof}[{\bf Proof of Theorem \ref{thm:4}}]
We first bound $|Z_k|$ for $k \in J^c$. Notice the first term in $Z_k$ follows that
\begin{align*}
\frac{1}{n\lambda_n}x_k^T\Pi_{X_J^\perp}W = \frac{1}{n\lambda_n}x_k^TW - \frac{1}{n\lambda_n}x_k^TX_J(X_J^TX_J)^{-1}X_J^TW,
\end{align*}
where $\frac{1}{n\lambda_n}x_k^TW$ follows
\begin{align*}
\bigg|\frac{1}{n\lambda_n}x_k^TW\bigg| \leq \frac{1}{\lambda_n} \|\frac{1}{n}X^TW\|_\infty \leq \frac{1}{2}
\end{align*}
and $\frac{1}{n\lambda_n}x_k^TX_J(X_J^TX_J)^{-1}X_J^TW$ follows
\begin{align*}
\bigg|\frac{1}{n\lambda_n}x_k^TX_J(X_J^TX_J)^{-1}X_J^TW\bigg|&\leq \frac{1}{\lambda_n}\|\frac{1}{n}x_k^TX_J\|_1\|(X_J^TX_J)^{-1}X_J^TW\|_\infty\\
\end{align*}
From Condition 2 in Theorem \ref{thm:1}, we know that 
\begin{align*}
\|\frac{1}{n}x_k^TX_J\|_1 \leq \sum_{j\in J} \frac{1}{n}|x_k^Tx_j| \leq \frac{1}{\gamma_1}
\end{align*}
and using Lemma \ref{lemma:6} we have
\begin{align*}
\|(X^TX/n)^{-1}\|_\infty = \max_{k\in Q}\|e_k^T(X_J^TX_J/n)^{-1}\|_1\leq (M_1 - 1/\gamma_1)^{-1}.
\end{align*}
Thus, we have
\begin{align*}
\frac{1}{\lambda_n}\|(X_J^TX_J)^{-1}X_J^TW\|_\infty \leq \frac{1}{\lambda_n}\|(X_J^TX_J/n)^{-1}\|_\infty\|\frac{1}{n}X_J^TW\|_\infty \leq\frac{\gamma_1}{2(M_1\gamma_1 - 1)}.
\end{align*}
Together, the first term can be bounded as
\begin{align}
\bigg|\frac{1}{n\lambda_n}x_k^T\Pi_{X_J^\perp}W\bigg|\leq \frac{1}{2} + \frac{1}{2(M_1\gamma_1 - 1)}. \label{eq:term2}
\end{align}
The second term can be bounded similarly as the first term, i.e., 
\begin{align*}
\bigg|\frac{1}{n}x_k^TX_J(X_J^TX_J)^{-1}\breve{z}_J\bigg|\leq \|\frac{1}{n}x_k^TX_J\|_1\|(X_J^TX_J)^{-1}\breve{z}_J\|_\infty\leq\frac{1}{M_1\gamma_1 - 1},
\end{align*}
Therefore, we have
\begin{align*}
|Z_k|\leq \frac{1}{2} + \frac{3}{2(M_1\gamma_1 - 1)}.
\end{align*}
It is easy to see that when $\gamma_1 > 32/M_1$, we have
\begin{align*}
|Z_k| < 1,~\forall k\in J^c
\end{align*}
and completes the proof for $Z_k$. We now turn our attention to $\Delta_k$ and check whether $sign(\beta_{*k}) = sign(\beta_{*k} + \Delta_k)$. For $\Delta_k$, we have
\begin{align*}
|\Delta_k| &= \bigg|e_k^T\bigg(\frac{1}{n}X_J^TX_J\bigg)^{-1} \bigg(\frac{1}{n}X_J^TW - \lambda_n sign(\beta_{*J})\bigg)\bigg|\\
&\leq \bigg|e_k^T\bigg(\frac{1}{n}X_J^TX_J\bigg)^{-1}\frac{X_J^TW}{n}\bigg| + \lambda_n\bigg\|\bigg(\frac{1}{n}X_J^TX_J\bigg)^{-1}\bigg\|_\infty\\
&\leq \bigg\|\bigg(\frac{1}{n}X_J^TX_J\bigg)^{-1}\bigg\|_\infty\|X_J^TW/n\|_\infty + \lambda_n\bigg\|\bigg(\frac{1}{n}X_J^TX_J\bigg)^{-1}\bigg\|_\infty\\
&\leq\frac{\gamma_1}{2(M_1\gamma_1 - 1)}\lambda_n + \frac{\gamma_1}{M_1\gamma_1 - 1}\lambda_n\\
&= \frac{3\gamma_1}{2(M_1\gamma_1 - 1)}\lambda_n.
\end{align*}
Thus, with the conditions in Theorem 2, we have
\begin{align*}
|\Delta_k|\leq \frac{3\gamma_1}{2(M_1\gamma_1 - 1)}\lambda_n = \frac{3}{2(M_1 - 1/\gamma_1)}\lambda_n < \frac{2}{M_1}\lambda_n.
\end{align*}
To meet the requirement $sign(\beta_{*k}) = sign(\beta_{*k} + \Delta_k)$, we just need $\min_{k\in J}|\beta_k| \geq \frac{2}{M_1}\lambda_n$ and this completes the proof.
\end{proof}

\section*{Appendix B: Proof of Corollary \ref{co:1} and \ref{co:2}}
To prove the two corollaries, we just need to adapt the magnitude of $\max_{i\neq j}\frac{1}{n}|x_i^Tx_j|$ to the correct order.
\begin{proof}[{\bf Proof of Corollary \ref{co:1} and \ref{co:2}}]
To prove Corollary \ref{co:1}, we just need to take $\gamma_2$ arbitrarily large and $q = 1$. The result follows immediately from Theorem \ref{thm:1}.
\par
To prove Corollary \ref{co:2}, we first determine the set $J$ by taking the larger signals as follows
\begin{align*}
J = \{k:~|\beta_k|\geq \lambda_n\}.
\end{align*}
Then the size of $J$ can be bounded as
\begin{align*}
s = |J| \leq \frac{R}{\lambda_n^r}
\end{align*}
and the size of $\|\beta_*J^c\|_1$ can be bounded as
\begin{align*}
\|\beta_*J^c\|_1 = \sum_{k\in J^c} |\beta_*k| \leq \lambda_n^{1 - r}\sum_{k\in J^c} |\beta_*k|^r\leq R\lambda_n^{1 - r}.
\end{align*}
Now we take $\gamma_2 = 1/\|\beta_*J^c\|_1$ and $q = 1$, then the bound on $\max_{i\neq j}\frac{1}{n}|x_i^Tx_j|$ becomes
\begin{align*}
\max_{i\neq j}\frac{1}{n}|x_i^Tx_j|\leq\min\bigg\{\frac{1}{\gamma_1 s}, \frac{\lambda_n}{\|\beta_*J^c\|_1}\bigg\} \leq \min\bigg\{\frac{\lambda_n^r}{\gamma_1 R}, \frac{\lambda_n^r}{R}\bigg\} \leq \frac{\lambda_n^r}{\gamma_1 R},
\end{align*}
which completes the proof.
\end{proof}

\section*{Appendix C: Proof of Lemma \ref{lemma:1} and \ref{lemma:2}}
We first fully state Lemma \ref{lemma:1} here
\begin{lemma}\citep{wang2015consistency}\label{lemma:1}
Assuming $X\sim N(0, \Sigma)$ and $p > c_0n$ for some $c_0 > 1$, we have that for any $C > 0$, there exists some constant $0 < c_1 < 1 < c_2$ and $c_3 > 0$ such that for any $i \neq j\in Q$
\begin{align*}
\mathbb{P}\bigg(\frac{1}{n}|\tilde x_i|_2^2 < \frac{c_1c_*}{c^*}\bigg)\leq 2e^{-Cn},\quad \mathbb{P}\bigg(\frac{1}{n}|\tilde x_i|_2^2 > \frac{c_2c^*}{c_*}\bigg) \leq 2e^{-Cn},
\end{align*}
and
\begin{align*}
\mathbb{P}\bigg(\frac{1}{n}|\tilde x_i^T\tilde x_j| > \frac{c_4c^*t}{c_*}\frac{1}{\sqrt{n}}\bigg)\leq 5e^{-Cn} + 2e^{-t^2/2},
\end{align*}
for any $t > 0$, where $c_4 = \sqrt{\frac{c_2(c_0 - c_1)}{c_3(c_0 - 1)}}$ and $c_*, c^*$ are the smallest and largest eigenvalues of $\Sigma$.
\end{lemma}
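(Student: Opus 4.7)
I plan to prove the two tail inequalities separately. The first is a direct sub-Gaussian tail for a linear functional of $\varepsilon$, conditional on $X$; the second requires a Sherman--Morrison reduction in order to decouple the Gaussian residual of $x_k$ given $X^{(-k)}$ from the rest of the design. In both cases, the $e^{-Cn}$ error terms come from spectral control of $XX^T/p$ supplied by Lemma~\ref{lemma:1} (together with the Marchenko--Pastur-type bounds used in its proof), while the $t^2$-exponents arise from Hoeffding/Gaussian concentration.

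For $\max_k \frac{1}{n}|\tilde x_k^T \tilde \varepsilon|$ I rewrite $\tilde x_k^T \tilde \varepsilon = (\bar F \tilde x_k)^T \varepsilon$ and condition on $X$. Since $\varepsilon$ is sub-Gaussian with parameter $\sigma$, the conditional distribution is sub-Gaussian with variance proxy at most $\sigma^2 \|\bar F \tilde x_k\|_2^2 \le \sigma^2 \|\bar F\|_{\mathrm{op}}^2 \|\tilde x_k\|_2^2$. Lemma~\ref{lemma:1} bounds $\|\tilde x_k\|_2^2 \le nc_2 c^*/c_*$ on an event of probability $\ge 1-2e^{-Cn}$, and the same random-matrix tools used in \citet{wang2015consistency} bound $\|\bar F\|_{\mathrm{op}}^2 = 1/\lambda_{\min}(XX^T/p)$ by a constant (depending on $c_*,c^*,c_0$) on another event of probability $\ge 1-2e^{-Cn}$. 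A sub-Gaussian tail bound and a union bound over $k\in Q$ then deliver the first claim, with the $4pe^{-Cn}$ accounting for the two pooled spectral events.

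For $\max_k \frac{1}{n}|\tilde x_k^T \tilde X^{(-k)} \beta_*^{(-k)}|$, the key identity is
\begin{equation*}
\tilde x_k^T \tilde X^{(-k)} \beta_*^{(-k)} \;=\; p\, x_k^T (XX^T)^{-1} W_k \;=\; \frac{p\, x_k^T A_k^{-1} W_k}{1 + x_k^T A_k^{-1} x_k},
\end{equation*}
where $W_k := X^{(-k)} \beta_*^{(-k)}$, $A_k := X^{(-k)} X^{(-k)T}$, and the second equality is Sherman--Morrison applied to $XX^T = A_k + x_k x_k^T$. Conditioning on $X^{(-k)}$ makes $A_k$ and $W_k$ deterministic, and by joint Gaussianity $x_k = X^{(-k)}\gamma_k + \eta_k$ with $\gamma_k = \Sigma_{-k,-k}^{-1}\Sigma_{-k,k}$ and $\eta_k \sim N(0,\tau_k^2 I_n)$ independent of $X^{(-k)}$. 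The centered Gaussian piece $\eta_k^T A_k^{-1} W_k$ then has conditional variance $\tau_k^2 W_k^T A_k^{-2} W_k \le \tau_k^2 \|W_k\|_2^2/\lambda_{\min}(A_k)^2$, and the concentrations $\|W_k\|_2^2 \asymp n(\sigma_0^2-\sigma^2)$ and $\lambda_{\min}(A_k) \asymp p c_*$ bring this to order $n(\sigma_0^2-\sigma^2)/p^2$. After multiplying through by the factor of $p$ and normalising by $n$, a Gaussian tail on this piece produces exactly the target rate $\sqrt{\sigma_0^2-\sigma^2}\,t/\sqrt{n}$. The denominator $1+x_k^T A_k^{-1} x_k$ has conditional expectation $1+O(n/p)$ and stays bounded away from zero with probability $\ge 1-O(e^{-Cn})$, so it contributes only harmless constant factors.

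The main technical obstacle is the non-Gaussian contribution $\gamma_k^T X^{(-k)T} A_k^{-1} W_k = \gamma_k^T P_k \beta_*^{(-k)}$ to the numerator, where $P_k$ is the projection onto the row space of $X^{(-k)}$ inside $\mathbb{R}^{p-1}$. Cauchy--Schwarz here loses a factor of $\sqrt{n}$, so I must instead use the fact that $P_k$ behaves like an $(n/(p-1))$-scaled random projection: for fixed $u,v$, the bilinear form $u^T P_k v$ concentrates around $(n/(p-1)) u^T v$ with fluctuations of order $\|u\|\|v\|\sqrt{n}/p$ in the appropriate $\Sigma_{-k,-k}$-weighted norms. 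Once that bound on $|\gamma_k^T P_k \beta_*^{(-k)}|$ is in hand, combining it with the Gaussian tail on $\eta_k^T A_k^{-1} W_k$ and taking a union bound over $k\in Q$ produces the second inequality, with the various spectral events absorbed into the $5pe^{-Cn}$ budget and the exact constant $c_*^3/(2c_4^2 c^{*2})$ emerging from the sum of the variance bounds.
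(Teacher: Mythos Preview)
Your proposal does not address the stated lemma. The statement you were asked to prove is Lemma~\ref{lemma:1}, which concerns the concentration of $\frac{1}{n}\|\tilde x_i\|_2^2$ (upper and lower bounds) and of $\frac{1}{n}|\tilde x_i^T\tilde x_j|$ for $i\neq j$. Your entire write-up instead targets the two tail bounds of Lemma~\ref{lemma:2}, namely $\max_k\frac{1}{n}|\tilde x_k^T\tilde\varepsilon|$ and $\max_k\frac{1}{n}|\tilde x_k^T\tilde X^{(-k)}\beta_*^{(-k)}|$; you even invoke Lemma~\ref{lemma:1} as an input, confirming that you are proving a different result. In the paper, Lemma~\ref{lemma:1} is not argued at all: it is quoted verbatim as Lemma~3 of \citet{wang2015consistency}, and the ``proof'' in Appendix~C is a one-line citation.

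As a side remark: even viewed as a proposal for Lemma~\ref{lemma:2}, your approach differs from the paper's. For the second inequality the paper exploits the rotational invariance of the $MACG(\Sigma)$ distribution of $H=X^T(XX^T)^{-1/2}$: writing $e_1^THH^Tv=\|v\|_2\,e_1^T(TH)(TH)^Te_2$ for an orthogonal $T$ that fixes $e_1$ and sends $v$ to $\|v\|_2e_2$, the problem reduces immediately to the off-diagonal estimate already supplied by Lemma~\ref{lemma:1}. This bypasses entirely the Sherman--Morrison decomposition and the ``non-Gaussian contribution'' $\gamma_k^TP_k\beta_*^{(-k)}$ that you flag as the main technical obstacle. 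Your route may be workable, but it is considerably more involved and, as you acknowledge, the control of that bilinear projection term is not fully spelled out.
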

Lemma \ref{lemma:1} and the first part of \ref{lemma:2} are existing results from \citep{wang2015consistency} and \citep{wang2015high}. We focus on proving the second part of Lemma \ref{lemma:2}.
\begin{proof}[{\bf Proof of Lemma \ref{lemma:1} and \ref{lemma:2}}]
Lemma \ref{lemma:1} follows immediately from Lemma 3 in \citep{wang2015consistency} and the first part of Lemma \ref{lemma:2} follows Lemma 4 in \citep{wang2015consistency}.
\par
To prove the second part of Lemma \ref{lemma:2}, we first define $H = X^T(XX^T)^{-\frac{1}{2}}$. When $X\sim N(0, \Sigma)$, $H$ follows the $MACG(\Sigma)$ distribution as indicated in Lemma 3 in \citep{wang2015consistency} and Theorem 1 in \citep{wang2015high}. For simplicity, we only consider the case where $k = 1$. 
\par
For vector $v$ with $v_1 = 0$, we define $v' = (v_2, v_3, \cdots, v_p)^T$ and we can always identify a $(p - 1)\times (p - 1)$ orthogonal matrix $T'$ such that $T'v' = \|v'\|_2e_1'$ where $e_1'$ is a $(p - 1) \times 1$ unit vector with the first coordinate being 1. Now we define a new orthogonal matrix $T$ as
\begin{align*}
T = \begin{pmatrix}
1 & 0\\
0 & T'
\end{pmatrix}
\end{align*}
and we have
\begin{align*}
Tv = \begin{pmatrix}
1 & 0\\
0 & T'
\end{pmatrix}
\begin{pmatrix}
0 \\
v'
\end{pmatrix}
=
\begin{pmatrix}
0\\
\|v\|_2e_1'
\end{pmatrix}
= \|v\|_2 e_2.
\quad\mbox{and}\quad
e_1^T T^T = e_1^T\begin{pmatrix}
1 & 0\\
0 & T^{'T}
\end{pmatrix}
= e_1^T
\end{align*}
Therefore, we have
\begin{align*}
e_1^THH^Tv = e_1^TT^TTHH^TT^TTv = e_1^TT^THH^TT^Te_2 = \|v\|_2 e_1^T\tilde H\tilde H^Te_2.
\end{align*}
Since $H$ follows $MACG(\Sigma)$, $\tilde H = T^TH$ follows $MACG(T^T\Sigma T)$ for any fixed $T$. Therefore, we can apply Lemma 3 in \citep{wang2015consistency} or Lemma \ref{lemma:1} again to obtain that
\begin{align*}
\Pr\bigg(&|e_1^TX^T(XX^T)^{-1}Xv|\geq \frac{\|v\|_2 c_4c^*t}{c_*}\frac{\sqrt{n}}{p}\bigg) = \Pr\bigg(|e_1^THH^Tv|\geq \frac{\|v\|_2 c_4c^*t}{c_*}\frac{\sqrt{n}}{p}\bigg)\\
 & = \Pr\bigg(\|v\|_2|e_1^T\tilde H\tilde H^Te_2|\geq \frac{\|v\|_2 c_4c^*t}{c_*}\frac{\sqrt{n}}{p}\bigg) = \Pr\bigg(|e_1^T\tilde H\tilde H^Te_2| \geq \frac{c_4c^*t}{c_*}\frac{\sqrt{n}}{p}\bigg)\leq 5e^{-Cn} + 2e^{-t^2/2}.
\end{align*}
Applying the above result to $v = (0, \beta_*^{(-1)})$ we have
\begin{align*}
\frac{1}{n}|\tilde x_1^T\tilde X^{(-1)}\beta_*^{(-1)}| = \frac{1}{n}|e_1\tilde X^T\tilde X v| = \frac{1}{n}\bigg|e_1 X^T\bigg(\frac{XX^T}{p}\bigg)^{-1}X v\bigg| = \frac{p}{n}|e_1 X^T(XX^T)^{-1}X v|\leq \frac{c_4c^*t}{c_*}\frac{\|\beta_*\|_2}{\sqrt{n}},
\end{align*}
with probability at least $1 - 5e^{-Cn} - 2e^{-t^2/2}$.
\par
In addition, we know that $\sigma_0^2 = var(Y) = \beta_*^T\Sigma\beta_* + \sigma^2$ and thus
\begin{align*}
\|\beta_*\|_2 \leq \sqrt{\frac{\sigma_0^2 - \sigma^2}{c_*}}.
\end{align*}
Consequently, we have
\begin{align*}
\Pr\bigg(\frac{1}{n}|\tilde x_1^T\tilde X^{(-1)}\beta_*^{(-1)}| \geq \frac{\sqrt{\sigma_0^2 - \sigma^2} t}{\sqrt{n}}\bigg) \leq 2\exp\bigg(-\frac{c_*^3}{2c_4^2c^{*2}}t^2\bigg) + 5e^{-Cn}.
\end{align*}
Applying the result to any $k\in Q$ and taking the union bound gives the result in Lemma \ref{lemma:2}.
\end{proof}

\section*{Appendix D: Proof of Theorem \ref{thm:2} and \ref{thm:3}}
We assemble all previous results to prove these two theorems.
\begin{proof}[{\bf Proof of Theorem \ref{thm:2} and \ref{thm:3}}]
We just need to verify the Condition 1 and 3 listed in Theorem \ref{thm:1} and the variants of Condition 2 in two corollaries.
\par
First, we verify Condition 1. Taking $M_1 = \frac{c_1c_*}{c^*}$ and $M_2 =
\frac{c_2c^*}{c_*}$ and using Lemma \ref{lemma:1}, we have that
\begin{align*}
\Pr\bigg(M_1\leq \frac{1}{n}|\tilde x_i^T\tilde x_i| \leq M_2,~\forall
i\in Q\bigg)\geq 1 - 4p e^{-Cn}.
\end{align*}

\par
Next, we verify Condition 3, which follows immediately from Lemma \ref{lemma:2}. For any $l \in\{1,2,3,\cdots, m\}$, we have
\begin{align*}
\max_l\frac{1}{n}\|\tilde X^{(l)}\tilde W^{(l)}\|_\infty \leq \max_{k\in Q}\frac{1}{n}\big|\tilde x_k^T \tilde X^{(-k)}\beta_{*}^{(-k)}\big| + \max_{k\in Q} \frac{1}{n}|\tilde x_k^T\tilde \varepsilon| \leq \frac{\sqrt{2}\sigma_0 t}{\sqrt{n}},
\end{align*}
with probability at least $1 - 2p\exp\bigg(-\frac{c_*c_0^2}{2c^*c_2(1 - c_0)^2} t^2\bigg) - 2p\exp\bigg(-\frac{c_*^3}{2c_4^2c^{*2}}t^2\bigg) - 9pe^{-Cn}$. Taking $t = A\sqrt{\log p}/(2\sqrt{2})$ for any $A>0$, we have
\begin{align*}
\Pr\bigg(\max_{l}\frac{1}{n}\|\tilde X^{(l)}\tilde W^{(l)}\|_\infty\geq \frac{1}{2}A\sigma_0\sqrt{\frac{\log p}{n}}\bigg) \leq 2p^{1 - C_1A^2} + 4p^{1 - C_2A^2} + 9pe^{-Cn},
\end{align*}
where $C_1 = \frac{c_*c_0^2}{16c^*c_2(1 - c_0)^2}$ and $C_2 = \frac{c_*^3}{16c_4^2c^{*2}}$. This also indicates that $\lambda_n$ should be chosen as
\begin{align*}
\lambda_n = A\sigma_0\sqrt{\frac{\log p}{n}}.
\end{align*}
\par
Finally, we verify the two conditions in Corollary \ref{co:1} and \ref{co:2}. Notice that Lemma \ref{lemma:1} indicates that
\begin{align*}
\Pr\bigg(\max_{i\neq j}\frac{1}{n}|\tilde x_i^T\tilde x_j| \geq A\sqrt{\frac{\log p}{n}}\bigg) \leq 2p^{1 - 8C_2A^2/c_*} + 5p e^{-Cn} \leq 2p^{1 - C_2A^2} + 5p e^{-Cn} .
\end{align*}
Therefore, the two conditions in Corollary \ref{co:1} and \ref{co:2} will be satisfied as long as
\begin{align*}
A^2\gamma_1^2s^2\frac{\log p}{n} \leq 1\quad\mbox{and}\quad A^2\gamma_1^2R^2\bigg(\frac{\log p}{n}\bigg)^{1 - r}\leq 1.
\end{align*}
Now we have verified that the three conditions hold for all subsets of the data. Let $\hat\beta^{(l)}$ and $\beta_*^{(l)}$ denote the estimate and true value of the coefficients on the $l^{th}$ worker and define $s_l = \|\beta_*^{(l)}\|_0$ and $R_l = \|\beta_*^{(l)}\|_r^r$. Applying Corollary \ref{co:1} and \ref{co:2} to each subset and taking $\gamma_1 = 64/M_1$ we have
\begin{align*}
\|\hat\beta^{(l)} - \beta_*^{(l)}\|_\infty\leq \frac{5A\sigma_0}{M_1}\sqrt{\frac{\log p}{n}}\quad\mbox{and}\quad \|\hat\beta^{(l)} - \beta_*^{(l)}\|_2^2\leq \frac{72A^2\sigma_0^2}{M_1^2} \frac{s_l\log p}{n}
\end{align*}
for $l = 1, 2,\cdots, m$ and $\beta_*$ being s-sparse. For $\beta_*\in\mathbb{B}(r, R)$, we have
\begin{align*}
\|\hat\beta^{(l)} - \beta_*^{(l)}\|_\infty\leq \frac{12 A\sigma_0}{M_1}\sqrt{\frac{\log p}{n}}\quad\mbox{and}\quad \|\hat\beta^{(l)} - \beta_*^{(l)}\|_2^2\leq \bigg(\frac{72}{M_1^2} + 38\bigg) (A\sigma_0)^{2-r} R_l\bigg(\frac{\log p}{n}\bigg)^{1 - \frac{r}{2}}.
\end{align*}
Notice that $\|\hat\beta - \beta_*\|_2^2 = \sum_{l = 1}^m \|\hat\beta^{(l)} - \beta_*^{(l)}\|_2^2$, $s = \sum_{l = 1}^m s_l$, $R_l = \sum_{l = 1}^m R_l$. Taking summation over $l$ and replacing $M_1$ by $c_1c_*/c^*$ completes the whole proof.
\end{proof}
\end{document}